\newcommand{\shortcite}[1]{\citeyear{#1}}
\theoremstyle{plain}
\newtheorem{theorem}{Theorem}[section]
\newtheorem{corollary}[theorem]{Corollary}
\theoremstyle{definition} 
\newtheorem{definition}[theorem]{Definition}
\newcommand{\ghafull}{{\sc Minimum Envy Graphical House Allocation}\xspace}
\newcommand{\gha}{{\sc ME-GHA}\xspace}
\newcommand{\fpt}{{\sf FPT}\xspace}
\newcommand{\xp}{{\sf XP}\xspace}
\newcommand{\np}{{\sf NP}\xspace}
\newcommand{\LR}[1]{\left\{#1\right\}}
\newcommand{\val}{\alpha}
\newcommand{\envy}{{\sf envy}}
\title{Minimum Envy Graphical House Allocation Beyond Identical Valuations}
\author[1]{Tanmay Inamdar}
\author[1]{Pallavi Jain}
\author[1]{Pranjal Pandey}
\affil[1]{\small Department of Computer Science and Engineering, Indian Institute of Technology Jodhpur, India.}
\date{}
\begin{document}

\maketitle

\begin{abstract}
     House allocation is an extremely well-studied problem in the field of fair allocation, where the goal is to assign $n$ houses to $n$ agents while satisfying certain fairness criterion, e.g., envy-freeness. To model social interactions, the \emph{Graphical} House Allocation framework introduces a social graph $G$, in which each vertex corresponds to an agent, and an edge $(u, v)$ corresponds to the potential of agent $u$ to envy the agent $v$, based on their allocations and valuations. In undirected social graphs, the potential for envy is in both the directions. In the {\sc Minimum Envy Graphical House Allocation} (\gha) problem, given a set of $n$ agents, $n$ houses, a social graph, and agent's valuation functions, the goal is to find an allocation that minimizes the total envy summed up over all the edges of $G$. Recent work, [Hosseini et al., AAMAS 2023, AAMAS 2024] studied \gha in the regime of polynomial-time algorithms, and designed exact and approximation algorithms, for certain graph classes under identical agent valuations.

   We initiate the study of \gha with non-identical valuations, a setting that has so far remained unexplored. We investigate the multivariate (parameterized) complexity of \gha by identifying structural restrictions on the social graph and valuation functions that yield tractability. 
    We also design moderately exponential-time algorithms for several graph classes, and a polynomial-time algorithm for {binary valuations that returns an allocation with envy at most one when the social graph has maximum degree at most 
one}. 
\end{abstract}

\section{Introduction}

Fair division is an important area at the intersection of economics and computer science that deals with allocating resources to agents in a manner that ensures some notion(s) of fairness. House allocation is an important special case of fair division, where the goal is to allocate $m\geq n$ indivisible items (called \emph{houses}) to $n$ agents, with each agent receiving exactly one house. Fairness is commonly measured using criteria such as envy-freeness, proportionality, pareto-optimality, among others. This model dates back to the work of \cite{shapley1974cores}, and has received significant attention in the fair division community \cite{DBLP:journals/jet/Miyagawa01,DBLP:journals/geb/Miyagawa02a,DBLP:conf/isaac/AbrahamCMM05}. 

Envy-freeness (and its relaxations) is by far the most well-studied fairness criteria, even in the general fair division setting. While an overwhelming majority of the literature has considered the setting where each pair of agents is capable of envying each other---of course, depending on the allocation of houses to the two agents, and their respective valuations,
in many real-life scenarios, however, agents are part of social networks consisting of their peers, friends, relatives etc. In such cases, an agent will be concerned about the house (item) allocated to themselves, as compared to those assigned to their social connections, while they are indifferent to the items allocated to complete strangers. The classical house allocation model fails to capture this degree of \emph{locality} of fairness. There is a growing body of literature that addresses this issue by introducing the notion of \emph{social graph}. A social graph is an undirected graph $G = (V, E)$, where each vertex $v \in V$ corresponds to each of the $n$ agents, and an edge between two agents $u$ and $v$ represents that the two agents are socially relevant to each other, implying the potential of envy between them. In other words, an agent is only capable to envy (and be envied by) its neighbors in the social graph. This formulation was first introduced by \cite{DBLP:journals/aamas/BeynierCGHLMW19} in the house allocation setting, and by \cite{DBLP:journals/ai/BredereckKN22,DBLP:journals/ai/EibenGHO23} in the general envy-free allocation setting. However, all of these works had only considered the envy-free allocation setting, which is not always possible.

Hosseini et al.~\shortcite{DBLP:conf/atal/HosseiniPSVV23} introduced the more general problem called \ghafull (\gha in short), where the goal is to find an allocation that minimizes the total envy added up across all the edges. In the input to \gha, we are given a social graph $G = (V, E)$, where $V$ is a set of $n$ agents, a set $H$ of $n$ houses, and $n$ valuation functions $\val_v: H \to \mathbb{Q}_{\ge 0}$ for each agent $v \in V$. We represent an instance of \gha as $(G, H, \LR{\val_v}_{v \in V})$. A bijection $\pi: V \to H$ is called an \emph{allocation}. The envy of an allocation $\pi$ is computed using the following expression: {$\sum_{(u,v) \in E}  \envy(u, v, \pi(u),\pi(v))$, where the envy across the edge $(u, v)$ w.r.t.~$\pi$, is given by $\envy(u, v, \pi(u),\pi(v)) = \max\LR{\val_v(\pi(u)) - \val_v(\pi(v)), 0} + \max\LR{\val_u(\pi(v)) - \val_u(\pi(u)), 0}$. The goal of the \gha problem is to find an allocation $\pi$ that minimizes the envy.} 

Although \cite{DBLP:conf/atal/HosseiniPSVV23} introduced this general model, they only focused the case of \emph{identical valuations} (we also call as identical agents), where the valuation functions $\val_v$ are identical for all agents $v \in V$. They noted that the problem in fact generalizes the well-known \textsc{Optimal Linear Arrangement (OLA)} problem, and is therefore computationally challenging -- i.e., \NP-hard, thereby ruling out polynomial-time algorithms unless $\mathsf{P = NP}$. On the positive side, they designed polynomial-time algorithms when the social graph is a path, cycle, or a clique. In a follow-up work, \cite{DBLP:conf/atal/Hosseini0SVV24} designed polynomial-time approximation algorithms for \gha on different classes of connected graphs, such as trees, planar graphs. However, because of the lower bounds inherited from OLA, the approximation ratios obtained were far from desirable.

\subsection{Our results and contributions}\label{section:results} 
This state of art, leaves a major gap in our understanding of the complexity of \gha\ with non-identical valuations, which we address in this work.

The problem is \NP-hard even when the social graph has maximum degree at most one~\cite{DBLP:journals/aamas/BeynierCGHLMW19}. However, under the additional restriction of identical valuations, it becomes polynomial-time solvable~\cite{DBLP:conf/atal/HosseiniPSVV23}. We  design a polynomial-time algorithm that returns an allocation with envy at most one when the valuations are \emph{binary}, i.e., each valuation function maps a house to $0$ or $1$ (which can be thought of as  ``like'' and ``dislike''), and the social graph is of maximum degree one (Theorem~\ref{thm:bivalued-warmup}). Our result also implies that, for such a social graph, for an odd number of agents, there is envy-free allocation, and our algorithm achieves this. We also show that there are instances with optimal envy one when the number of agents is even.  

Since the problem is \NP-hard for identical valuations (i.e., agents), the next natural question is the complexity when houses are identical and agents need not be identical. The problem is trivially polynomial-time solvable as we can assign any house to an agent because an agent values all the houses equally. This tractability result and the following practical scenario motivate us to study the parameter \emph{number of house types} (defined in the next paragraph).

In the eponymous application of the problem, it is much more likely that each house is characterized by a limited number of features, such as the total area, the number of rooms, and the neighborhood. In many real-world settings, especially in urban housing markets, a large fraction of houses belong to apartment complexes where multiple units are essentially identical in terms of layout, size, amenities, and location. E.g., a single apartment building may contain dozens of units with the same floor plan and specifications, and hence these units are naturally perceived as having the same value by the agents. Such features define the \emph{``type''} of a house, and typically the number of house-types~$\ell$ is small, whereas there is varying—but usually high—multiplicity of houses of each type. Note that, in this case, each agent can have a different valuation for each feature, and therefore they can value each house-type differently. Furthermore, different agents' valuations for a house-type can be completely different; however each agent will value all houses of the same type equally.

As we discussed above, \gha is trivially polynomial-time solvable for $\ell=1$; however, for $\ell=2$, it is \NP-hard due to a reduction similar to~\cite[Theorem 3.2]{DBLP:conf/atal/HosseiniPSVV23} (instead of concentrating half of the values around $0$, we set it to $0$. Similarly, remaining half of the values are set to $1$. We present the proof for completeness in Theorem \ref{thm:me-gha-l2-nphard}). Note that in this reduction, valuations are binary and agents are identical. Thus, the problem is \NP-hard even when number of house types is constant, agents are identical, and valuations are binary. 
Thus, in the hope of tractability, we explore structural properties of the social graph, when $\ell$ is constant. 

In this paper, we consider three structural properties of the graph which have been also considered in the literature: (i) treewidth ($\mathrm{tw}$)\footnote{Treewidth is a popular measure of sparsity of the graph, which, at a high level, quantifies how ``tree-like'' a graph is. A formal definition can be found in the preliminaries. 
}, (ii) vertex cover number ($vc$), and (iii) clique modulator ($k$), defined as the distance to a complete graph.
Note that, the problem is \NP-hard even for trees~\cite{DBLP:conf/atal/Hosseini0SVV24}, thus, para-\NP-hard with respect to the treewidth ($\mathrm{tw}$) of the graph, since treewidth of a tree is $1$. This parameter has been considered in the several fair allocation papers, including house allocation~\cite{DBLP:journals/ai/EibenGHO23,DBLP:conf/atal/Hosseini0SVV24,DBLP:journals/ai/BredereckKN22}. For a constant number of house types $\ell$, we design an \fpt algorithm parameterised by $\mathrm{tw}$ (Theorem~\ref{thm:tw}). The problem is also known to be {\sf W}$[1]$-hard with respect to the parameter $vc$~\cite{DBLP:journals/aamas/BeynierCGHLMW19}, the vertex cover number {(a parameter larger than treewidth as treewidth is upper bounded by $vc$)} (Theorem~\ref{thm:gha-vc-typed}). {We design an \fpt\ algorithm parameterised by $vc$ when $\ell$ is treated as a constant. This algorithm also implies that the problem is also \fpt\ parameterised by $\ell + vc$ (Theorem~\ref{thm:gha-vc-typed}).  Note that although vertex cover can be bounded in terms of treewidth, the algorithm of Theorem~\ref{thm:tw} does \emph{not} yield an \fpt\ algorithm parameterised by $\ell + vc$. as it has a factor $n^\ell$ is the running time.} The problem is polynomial-time solvable for complete graphs~\cite{DBLP:conf/atal/HosseiniPSVV23}. In the parameterized complexity, distance from a tractable instance is a natural parameter. It remains an open question whether \gha admits \fpt algorithm with respect to the parameter $k$, the distance to a complete graph. We design an \fpt algorithm when parameterised by $k$, when $\ell$ is constant. This algorithm is also \fpt parameterised by $\ell+k$ (Theorem~\ref{thm:gha-clique-modulator}). This result also implies an {\xp} algorithm with respect to $k$. 

In another direction, in the most general case when $\ell = n$, i.e., when each house is unique, we obtain single-exponential algorithm that runs in time $\mathcal{O}^\star(4^{n+ \mathrm{tw} \log n})$ \footnote{$\mathcal{O}^\star$ notation hides factors that are polynomial in $n$, i.e., an algorithm with running time $T \cdot n^{\mathcal{O}(1)}$ can be written as $\mathcal{O}^\star(T)$. Here, $T$ can be a function of $n$ or other parameters of interest.} by substituting $\ell=n$ in the algorithm parameterized by $\ell+\mathrm{tw}$ (Theorem~\ref{thm:tw}). Specifically, for social graphs of bounded treewidth, such as (disjoint unions of) paths, cycles, and forests, we obtain an $\mathcal{O}^\star(4^n)$ algorithm as treewidth is constant for these classes of graphs, and since the treewidth of planar graphs is bounded by $\mathcal{O}(\sqrt{n})$, we obtain an $\mathcal{O}^\star(4^{n+o(n)})$ algorithm in this case. 
More generally, this result extends to all graph classes admitting ``small'' balanced separators (defined in preliminaries). Recall that if a graph class is \(f(n)\)-balanced-separable, then its treewidth is bounded by \(\mathcal{O}(f(n))\)~\cite{cygan2015parameterized}. Consequently, our
treewidth-based algorithm implies that for any \(f(n)\)-balanced-separable class of social graphs, \gha can be solved in time $\mathcal{O}^\star\left(4^n \cdot 2^{\mathcal{O}(f(n)\log n)}\right)$,
as stated in Corollary~\ref{cor:balsep}. In the special case of trees, we design an improved algorithm that runs in time $\mathcal{O}^\star(2^n)$, if the valuations are polynomially bounded integers. The problem is known to be \NP-hard for all these classes of graphs~\cite{DBLP:conf/atal/Hosseini0SVV24}.  

Next, we note that the disconnectedness of the social graph was the main source of hardness in the reductions shown in \cite{DBLP:conf/atal/HosseiniPSVV23}. On the positive side, in case of identical valuations, they extended their polynomial-time algorithms for paths, cycles, and cliques, to the disjoint unions of the respective graph classes. Specifically, if the social graph is a disjoint union of $r$ connected components, all of which are paths, cycles, or equally-sized cliques, they designed an $\mathcal{O}^\star(r!)$ algorithm for identical valuations. We obtain a substantially generalized form of this result in the case of \emph{non-identical valuations}. Consider a social graph $G$ that is a disjoint union of (an arbitrary number of) connected components such that the problem can be solved in time $T$ on each connected component individually. In this case, we design an $\mathcal{O}^\star(2^n \cdot T)$ time algorithm for solving the problem on $G$ assuming that the valuations are integral and polynomially bounded (Theorem~\ref{thm:disjoint}) (without this assumption, the running time slightly worsens to $\mathcal{O}^\star(3^n \cdot T)$; Theorem~\ref{thm:disjoint}). {Notice that, when each component belongs to a polynomial-time solvable class (i.e., $T = n^{\mathcal{O}(1)}$ and when the number of connected components $r$ is $\Omega(n/\log n)$, our algorithm is faster than the aforementioned $\mathcal{O}(r!)$ algorithm.}  

A summary of our algorithmic results is given in Table~\ref{tab:summary}. 

\subsection{Related work}
As stated earlier, fair division and specifically house allocation problems have a rich and diverse history, and a relevant literature on the problem can be found in surveys e.g.,~\cite{DBLP:journals/ai/AmanatidisABFLMVW23,tapki2017brief}. More relevant to the current work are models that incorporate \emph{ locality of envy}. In the context of general envy-free allocation, this includes \cite{DBLP:journals/ai/EibenGHO23,DBLP:journals/ai/BredereckKN22}. Specifically, the former work considers treewidth of the social graph and the number of item types as a parameter; albeit crucially their algorithm has an \xp dependence on both (i.e., a running time of the form $\mathcal{O}^\star(n^{(\ell \cdot \mathrm{tw})})$), as opposed to our running time of $\mathcal{O}^\star(\ell^{\mathrm{tw}} \cdot n^{\mathcal{O}(\ell)})$, which is \fpt in $\mathrm{tw}$ when $\ell = \mathcal{O}(1)$. In the context of graphical house allocation,  Beyneir et al.~\shortcite{DBLP:journals/aamas/BeynierCGHLMW19} first studied the envy-free setting, and subsequent works ~\cite{DBLP:conf/atal/HosseiniPSVV23,DBLP:conf/atal/Hosseini0SVV24} consider the problem of minimizing envy. Other related works in this model \cite{MadathilMS25,DBLP:journals/corr/abs-2505-00296} consider different objectives, such as minimizing the maximum envy, or the number of envious agents.

\begin{table}[ht]
\centering
\footnotesize
\begin{tabularx}{0.6\textwidth}{|l|c|}

\cline{1-2}

\textbf{Setting} & \textbf{Time Complexity} \\
\cline{1-2}

\multicolumn{2}{|c|}{\textbf{Polynomial-Time Algorithms}} \\
\cline{1-2}
\makecell[l]{ Graphs with degree at most one (odd vertices) } &
polynomial time \\
\makecell[l]{  Graphs with degree at most one (even vertices) } &
1-additive approximation \\
\cline{1-2}

\multicolumn{2}{|c|}{\textbf{Parameterization by \(\ell\) Types of Houses}} \\
\cline{1-2}
\makecell[l]{treewidth \( \mathrm{tw} \)} &
\( \mathcal{O}^\star\left(\ell^{\mathrm{tw}}
\cdot \left(1 + \frac{n}{\ell}\right)^{2(\ell - 1)}\right) \) \\
\makecell[l]{Vertex cover number \( vc \)} &
\( \mathcal{O}^\star(\ell^{vc}) \) \\
\makecell[l]{Clique modulator \( k \)} &
\( \mathcal{O}^\star(\ell^{k}) \) \\
Complete bipartite graphs &
\( \mathcal{O}^\star\left( \left(1 + \frac{\min\{|L|,|R|\}}{\ell} \right)^\ell \right) \) \\
Split graphs &
\( \mathcal{O}^\star\left(\ell^{\min\{|C|, |I|\}} \right) \) \\
\cline{1-2}

\multicolumn{2}{|c|}{\textbf{Exact Algorithms}} \\
\cline{1-2}

Tree graphs &
\( \mathcal{O}^\star(2^n) \) \\
Disjoint union of graphs solvable in time \( T \) &
\( \mathcal{O}^\star(2^n \cdot T) \) \\
Cycles, paths, forests &
\( \mathcal{O}^\star(4^{n}) \) \\
\makecell[l]{\( f(n) \)-balanced separable graphs} &
\( \mathcal{O}^\star\left( 4^n \cdot 2^{\mathcal{O}(f(n)\log n)} \right) \) \\
Planar graphs &
\( \mathcal{O}^\star\left( 4^{n+o(n)} \right) \) \\
\cline{1-2}

\end{tabularx}
\caption{Summary of Algorithmic Results}
\label{tab:summary}

\end{table}

\section{Preliminaries}
\label{sec:prelims}

We begin by introducing the notation and basic definitions used throughout the paper,
and then formally define the optimization problem under consideration.

\begin{definition}[Envy Between Two Agents]
Let \( h_v \) and \( h_u \) be the houses assigned to agents \( v \) and \( u \), respectively.
The \emph{envy between agents \( v \) and \( u \)} is defined as
\begin{align*}
\envy(v, u, h_v, h_u) 
= \max\left\{ \val_v(h_u) - \val_v(h_v),\ 0 \right\} 
 + \max\left\{ \val_u(h_v) - \val_u(h_u),\ 0 \right\}.
\end{align*}
\end{definition}


We now give the formal definition of \ghafull.

\begin{mdframed}
\textbf{Minimum Envy Graphical House Allocation} (\gha)
\\[0.5em]
\textbf{Input:}
\begin{itemize}
  \item An undirected graph \( G = (V, E) \), where each vertex represents an agent and each edge \( (v, u) \in E \) represents a social connection.
  \item A set of houses \( H \), with \( |H| = |V| \).
  \item For each agent \( v \in V \), a valuation function \( \val_v : H \to \mathbb{Q}_{\ge 0} \), where \( \val_v(h) \) denotes the value agent \( v \) assigns to house \( h \).
\end{itemize}

\textbf{Output:}  
An allocation \( \pi : V \to H \) that minimizes the total envy:
\[
\envy(\pi) \coloneqq \sum_{(v, u) \in E} \envy(v, u, \pi(v), \pi(u)).
\]
\end{mdframed}

Parameterized complexity provides a refined framework for analyzing computational
problems with respect to an additional parameter that captures structural properties
of the input.

\begin{definition}[Parameterized Problem]
A \emph{parameterized problem} is a decision or optimization problem whose input
is a pair \((I, k)\), where \(I\) is the main input and \(k \in \mathbb{N}\) is a
parameter. The size of the input is denoted as $|I|$.
\end{definition}

\begin{definition}[Fixed-Parameter Tractability (\fpt) and \xp]
A parameterized problem is \emph{fixed-parameter tractable (\fpt)} (resp.~$\xp$) with respect to a parameter \(k\) if it can be solved in time $f(k)\cdot |I|^{O(1)}$ (resp.~$|I|^{f(k)}$), for some computable function \(f\) depending only on \(k\).
\end{definition}


\begin{definition}[W-hardness]
A parameterized problem is {\sf W[1]-hard} (resp.\ {\sf W[t]-hard}) if every problem
in the class {\textsf W[1]} (resp.\ {\sf W[t]}) can be reduced to it via a
parameterized reduction.
Unless \(\sf{FPT} = \sf{W[1]}\), no W[1]-hard problem admits an \fpt algorithm.
\end{definition}

We next recall the notion of \emph{treewidth}, a fundamental measure of graph sparsity that plays a central role in the design of dynamic programming algorithms on graphs. Treewidth-based techniques will be extensively used in Section~\ref{subsec:treewidth}.

\begin{definition}[Tree Decomposition and Treewidth]
A \emph{tree decomposition} of a graph \( G = (V, E) \) is a pair
\( (\mathcal{T}, \mathcal{B}) \), where \( \mathcal{T} \) is a tree and
\( \mathcal{B} = \{B_t \subseteq V \mid t \in V(\mathcal{T})\} \) is a collection of
vertex subsets called \emph{bags}, such that:
\begin{enumerate}
    \item For every vertex \( v \in V \), there exists a bag \( B_t \in \mathcal{B} \)
    with \( v \in B_t \).
    \item For every edge \( (u, v) \in E \), there exists a bag \( B_t \in \mathcal{B} \)
    such that \( \{u, v\} \subseteq B_t \).
    \item For every vertex \( v \in V \), the set
    \( \{ t \in V(\mathcal{T}) \mid v \in B_t \} \) induces a connected subtree of
    \( \mathcal{T} \).
\end{enumerate}
We root the tree $\mathcal{T}$ at an arbitrary vertex. For a node $t \in V(\mathcal{T})$ and use $\mathcal{T}_t$ to denote the subtree rooted at $t$ and $V_t \coloneqq \bigcup_{t \in V(\mathcal{T}_i)} B_t$. The \emph{width} of a tree decomposition is defined as
\( \max_{t} |B_t| - 1 \), and the \emph{treewidth} of \( G \), denoted
\( \mathrm{tw}(G) \), is the minimum width over all tree decompositions of \( G \).
\end{definition}

\begin{definition}[Nice Tree Decomposition] \label{def:nicetree}
A \emph{nice tree decomposition} of a graph \( G = (V, E) \) is a tree decomposition
\( (\mathcal{T}, \mathcal{B}) \) where \( \mathcal{T} \) is a rooted tree and every node
of \( \mathcal{T} \) is of one of the following four types:
\begin{itemize}
    \item \textbf{Leaf Node:} A node \( t \) with an empty bag, i.e., \( B_t = \emptyset \).

    \item \textbf{Introduce Node:} A node \( t \) with a single child \( t' \) such that
    \( B_t = B_{t'} \cup \{v\} \) for some vertex \( v \notin B_{t'} \).

    \item \textbf{Forget Node:} A node \( t \) with a single child \( t' \) such that
    \( B_t = B_{t'} \setminus \{v\} \) for some vertex \( v \in B_{t'} \).

    \item \textbf{Join Node:} A node \( t \) with two children \( t_1 \) and \( t_2 \),
    where \( B_t = B_{t_1} = B_{t_2} \).
\end{itemize}

It is well known that any tree decomposition of width \( k \) can be transformed,
in linear time, into an equivalent nice tree decomposition of width at most \( k \) and with \( \mathcal{O}(nk) \) nodes, where \( n = |V| \) ~\cite{cygan2015parameterized}. 
\end{definition}

We next recall several standard graph classes that appear in our complexity results.

\begin{definition}[Vertex Cover]
A \emph{vertex cover} of a graph \(G=(V,E)\) is a set \(C \subseteq V\) such that for
every edge \((u,v) \in E\), at least one of \(u\) or \(v\) belongs to \(C\).
\end{definition}

\begin{definition}[Split Graph]
A graph \(G=(V,E)\) is a \emph{split graph} if its vertex set can be partitioned into
two sets \(C\) and \(I\), where \(C\) is a clique and \(I\) is an
independent set.
\end{definition}

\begin{definition}[Bipartite Graph]
A graph \(G=(V,E)\) is called \emph{bipartite} if its vertex set \(V\) can be
partitioned into two disjoint sets \(L\) and \(R\) such that every edge
\(e \in E\) has exactly one endpoint in \(L\) and the other in \(R\).
The pair \((L,R)\) is called a \emph{bipartition} of \(G\).

A bipartite graph \(G=(L \cup R, E)\) is called \emph{complete bipartite} if
every vertex in \(L\) is adjacent to every vertex in \(R\).
\end{definition}

\section{Polynomial-Time optimal (approximation) Algorithm in Binary Case when the Social Graph has Maximum Degree $1$}

Let $\{G,H,\{ \val_v \}_{v\in V}\}$ be an instance of \gha, where
for every agent $v\in V(G)$, the valuation function $\val_v : H \rightarrow \{0,1\}$. 
We refer to such instances as \emph{binary instances}.
{We further assume that the maximum degree of $G$ is at most $1$.
Consequently, $G$ is a disjoint union of isolated vertices and isolated edges.
Formally, the vertex set $V(G)$ can be partitioned as $ V^{\text{iso}} \;\cup\; \{a_1, \ldots, a_{k}\}$ , where $V^{\text{iso}}$ is the set of isolated agents, and  for each $i\in\{1,\dots,k\}$, $a_i$ is a vertex of degree one. 

Let the edge set be
$E(G) = \bigcup_{i=1}^{\nicefrac{k}{2}} \{ (a_{2i-1},a_{2i}) \}$,
so that each agent is incident to at most one edge.
Whenever $(a_i,a_{i+1})\in E(G)$, we say that $a_i$ and $a_{i+1}$ are
\emph{adjacent agents}.
} Rather than assigning houses to each agent, we will assign a pair of house to an edge.

We design a polynomial-time algorithm that computes an \emph{almost envy-free} allocation for binary instances, when the social graph has degree at most one. In particular, at most one pair of adjacent agents may have a non-zero envy in the allocation returned by this algorithm.

Next, we prove an auxiliary lemma that lets us obtain an envy-free allocation for a pair of adjacent agents, as long as at least three houses are available.

\begin{restatable}{lemma} {lemthreehouse}
\label{obs:three-house-envy-resolution}
Let \( (a_i, a_j) \) be a pair of agents and \( \{h_k, h_{k+1}, h_{k+2}\} \) be any three houses. Then, two of these houses can always be allocated to \( a_i \) and \( a_j \) in an envy-free manner. 
\end{restatable}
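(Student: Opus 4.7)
The plan is a short case analysis driven by which of the three houses each agent ``likes''. For each agent $a \in \{a_i, a_j\}$, let $L_a \subseteq \{h_k, h_{k+1}, h_{k+2}\}$ denote the subset of houses with $\val_a(h) = 1$. Since valuations are in $\{0,1\}$, envy between $a_i$ and $a_j$ under an assignment of $h_a$ to $a_i$ and $h_b$ to $a_j$ vanishes precisely when $\val_i(h_a) \ge \val_i(h_b)$ and $\val_j(h_b) \ge \val_j(h_a)$. I will exhibit, in each case, two distinct houses from $\{h_k, h_{k+1}, h_{k+2}\}$ that meet both inequalities.

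First I would dispose of the degenerate case where one of the $L$-sets is empty. If $L_i = \emptyset$, then $a_i$ values every house at $0$ and can never envy, so we simply need to avoid envy from $a_j$: if $L_j = \emptyset$, any two houses work; otherwise give $a_j$ a house from $L_j$ and $a_i$ any other house, so $a_j$ obtains value $1 \ge \val_j(\pi(a_i))$. The case $L_j = \emptyset$ is symmetric.

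Next, assume $L_i \neq \emptyset$ and $L_j \neq \emptyset$. The main subcase is when there exist distinct houses $h \in L_i$ and $h' \in L_j$ with $h \neq h'$: assigning $h$ to $a_i$ and $h'$ to $a_j$ gives both agents value $1$ for their own house, which upper bounds their valuation of the other's house, so no envy arises. The only remaining possibility is that every pair $(h, h') \in L_i \times L_j$ satisfies $h = h'$; this forces $L_i$ and $L_j$ to be singletons with $L_i = L_j = \{h^*\}$ for some common house $h^*$. Among $\{h_k, h_{k+1}, h_{k+2}\}$ there are then two houses $h', h''$ disliked by both agents, and assigning $h'$ to $a_i$ and $h''$ to $a_j$ yields $\val_i(h') = \val_i(h'') = \val_j(h') = \val_j(h'') = 0$, so once again there is no envy.

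The only mildly subtle step is verifying the exhaustive nature of the case split, specifically that ``no distinct $h \in L_i, h' \in L_j$ exists'' implies $|L_i| = |L_j| = 1$ with $L_i = L_j$; this follows because any $h \in L_i$ would have to equal every $h' \in L_j$, and vice versa. Apart from this observation, every case provides an explicit envy-free assignment using two of the three available houses, which proves the lemma.
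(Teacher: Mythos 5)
Your proof is correct. It takes a genuinely different (though equally elementary) route from the paper's. The paper first tries assigning $h_k$ and $h_{k+1}$ and, only if that yields a ``bad pair'' (both agents value $h_k$ at $0$ and $h_{k+1}$ at $1$, up to symmetry), repairs the allocation by casing on the utility vector $[\val_{a_i}(h_{k+2}), \val_{a_j}(h_{k+2})]$ of the third house. You instead organize the case analysis around the liked-sets $L_i, L_j \subseteq \{h_k,h_{k+1},h_{k+2}\}$, observing that envy-freeness holds exactly when each agent weakly prefers its own house, and then splitting on whether some agent likes nothing, whether there exist distinct liked houses $h \in L_i$, $h' \in L_j$, or whether $L_i = L_j$ is a common singleton (in which case two of the three houses are disliked by both). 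Your decomposition is somewhat cleaner: it makes the exhaustiveness of the cases transparent, handles the degenerate situations (an agent liking no house) explicitly rather than folding them into an informally defined ``good or nice pair,'' and avoids the paper's two-stage ``assign then repair'' structure. The paper's version, on the other hand, more directly mirrors the algorithmic procedure used later in Theorem~\ref{thm:bivalued-warmup}, where pairs are assigned greedily and a spare house is used to fix a bad pair. Both arguments rely on the same underlying fact that with binary valuations and three houses a finite check suffices, so the difference is one of presentation rather than substance.
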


\begin{proof} 
{
If two agents $a_i$ and $a_j$ are not adjacent in the social graph, then
they do not envy each other.
Hence, it suffices to consider adjacent agents.
By our notation, any adjacent pair is denoted as $(a_i,a_{i+1})$.
Consider the following two cases for assigning houses to agents
$a_i$ and $a_{i+1}$:
\begin{enumerate}
    \item The agents $a_i$ and $a_{i+1}$ can be assigned houses
    $h_k$ and $h_{k+1}$ such that the resulting tuple
    $(a_i,a_{i+1},h_k,h_{k+1})$ is a good or nice pair.
    In this case, the allocation is already envy-free.
    \item The agents $a_i$ and $a_{i+1}$ are assigned houses
    $h_k$ and $h_{k+1}$, respectively, and the tuple
    $(a_i,a_{i+1},h_k,h_{k+1})$ forms a bad pair.
    That is, both agents strictly prefer the same house over the other.
    In this case, an additional house can be used to resolve the
    resulting envy using the method described below.
\end{enumerate}
}

Without loss of generality, suppose
\[
\val_{a_i}(h_k) = \val_{a_{i+1}}(h_k) = 0, \quad
\val_{a_i}(h_{k+1}) = \val_{a_{i+1}}(h_{k+1}) = 1,
\]
The third house \( h_{k+2} \) has utility vector:
\[
\begin{aligned}
\val(h_{k+2})
&= \bigl[\val_{a_i}(h_{k+2}),\ \val_{a_{i+1}}(h_{k+2})\bigr] 
\in \{[0,0],\ [1,1],\ [0,1],\ [0,1]\}.
\end{aligned}
\]
We analyze each possible case:

\begin{description}
    \item[\textbf{Case 1:}] \( \val(h_{k+2}) = [0, 0] \) \\
    Swap \( h_{k+2} \) with \( h_{k+1} \). The agents are now assigned \( h_{k} \) and \( h_{k+2} \), both of which they value as \( 0 \). Thus, neither envies the other.

    \item[\textbf{Case 2:}] \( \val(h_{k+2}) = [1, 1] \) \\
    Swap \( h_{k+2} \) with \( h_k \). The agents are now assigned \( h_{k+2} \) and \( h_{k+1} \), both valued as \( 1 \), so neither envies the other.

    \item[\textbf{Case 3:}] \( \val(h_{k+2}) = [0, 1] \) \\
    Assign \( h_{k+2} \) to \( a_i \), and assign the house valued as \( 1 \) (either \( h_{k+1} \) or \( h_{k} \)) to \( a_{i+1} \). Now, \( a_i \) gets \( 0 \), and \( a_{i+1} \) gets \( 1 \), and since \( a_{i+1} \) sees no house better than their own in the other's allocation, envy is resolved.

    \item[\textbf{Case 4:}] \( \val(h_{k+2}) = [1, 0] \) \\
    Symmetric to Case 3: assign \( h_{k+2} \) to \( a_{i+1} \), and give \( a_i \) the house they value as \( 1 \). Again, both agents are content with their allocation.
\end{description}
\end{proof}

Armed with Lemma~\ref{obs:three-house-envy-resolution}, we prove the following result.

\begin{theorem} \label{thm:bivalued-warmup}
    There exists a polynomial-time algorithm that, given a binary instance $(G = (V, E), H, \LR{\val_v}_{v \in V})$ of \gha where $G$ has maximum degree at most $1$; and computes an allocation $\pi: V \to H$ with the following properties.
    \begin{enumerate}
        \item If all vertices in $G$ have degree exactly one, then all adjacent pairs of agents, except possibly one, are envy-free according to $\pi$.
        \item If there exists at least one isolated vertex in $G$, then $\pi$ is an envy-free allocation.
    \end{enumerate}
\end{theorem}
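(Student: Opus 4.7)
The plan is to process the edges of $G$ one at a time and invoke Lemma~\ref{obs:three-house-envy-resolution} on each adjacent pair, using three currently unassigned houses to produce an envy-free allocation of two of them to that pair. Because isolated vertices contribute zero to the total envy regardless of their house assignment, they can safely absorb any leftover houses at the end.

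Concretely, I maintain a pool $U$ of unassigned houses, initialized to $H$. For $i = 1, 2, \ldots, k/2$, if $|U| \ge 3$, pick any three houses from $U$, apply Lemma~\ref{obs:three-house-envy-resolution} to the pair $(a_{2i-1}, a_{2i})$ with those three houses, assign the two returned houses to the pair envy-freely and remove them from $U$ (the third house stays in $U$). If only $|U| = 2$ houses remain when an edge is to be processed, assign them to the pair in either order; this pair may incur non-zero envy. After all matched edges are processed, distribute the remaining houses in $U$ to the isolated vertices through any bijection. Each iteration runs in polynomial time and there are $k/2 \le n/2$ iterations, so the overall running time is polynomial.

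Correctness reduces to a straightforward counting argument. Just before iteration $i$, the pool has size $|U| = n - 2(i-1)$. When $|V^{\text{iso}}| \ge 1$, we have $n \ge k+1$, so for every $i \le k/2$, $|U| \ge n - k + 2 \ge 3$; the lemma therefore applies at every iteration and all matched pairs are envy-free. Since the isolated vertices contribute nothing, $\pi$ is envy-free, proving Property~2. When $V^{\text{iso}} = \emptyset$, we have $n = k$, and $|U| \ge 3$ holds precisely for $i \le k/2 - 1$; hence the first $k/2 - 1$ matched pairs are made envy-free by the lemma, while only the last pair, assigned from the final two houses, may incur envy, proving Property~1.

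The main technical content is already encapsulated in Lemma~\ref{obs:three-house-envy-resolution}: a bad pair can always be resolved using a single auxiliary house. Once that is in hand, the remaining argument is entirely a house-accounting calculation, and neither the order in which the edges are processed nor the specific triples chosen at each step affect correctness, since the lemma holds for any choice of adjacent pair and any three houses.
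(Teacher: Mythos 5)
Your proposal is correct and follows essentially the same strategy as the paper: repeatedly invoke Lemma~\ref{obs:three-house-envy-resolution} on each matched pair and track the number of unassigned houses to show the lemma applies at every step except possibly the last pair when there are no isolated vertices. The only (cosmetic) difference is ordering --- the paper assigns isolated agents first and then ``borrows'' an isolated agent's house as the third house for the final edge, whereas you process the edges first and let isolated vertices absorb the leftovers, which slightly streamlines the accounting without changing the substance.
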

\begin{proof}
Since the given instance is binary, each agent $a \in V$ has a utility function $\val_a: H \to \{0, 1\}$. Recall that connected components of $G$ consist of either isolated vertices or disjoint edges.

\textbf{Step 1: Assign houses to isolated agents.}  
 For each $a \in V^{\text{iso}}$, assign any available house $h$ to $a$ that maximizes $\val_a(h)$. Since these agents have no neighbors, their assignments are trivially envy-free.

\textbf{Step 2: Assign houses to agent pairs.}  
Let $P \subseteq V$ be the set of remaining agents whose degree is $1$. 
Clearly, $G[P]$ is a disjoint union of edges. Let $m = |P|/2$ be the number of such edges, and there are  $2m$ houses available for these edges (after assigning the isolated agents).

We allocate houses to these edges one-by-one. At each step, we choose any unassigned edge $(a_i, a_j)$ and two unassigned houses $(h_1, h_2)$, and try to assign them in a way that the pair is envy-free. Note that at any iteration except the last, we have two agents $(a_{i}, a_{i+1})$ connected by an edge, and at least four remaining houses (since this is not the last pair of agents). Therefore, by Lemma \ref{obs:three-house-envy-resolution}, we can find a pair of houses to be assigned to $(a_{i}, a_{i+1})$ that results in zero envy. Note that, in this procedure, only the last pair of agents may get assigned a pair of houses that results in envy $1$.





\textbf{Step 3: Use isolated agent's house to resolve envy in the last pair (if needed).}  
If $G$ contains at least one isolated vertex, then the house assigned to that vertex can serve as a third house in resolving any envy in the last edge. In this case, we always have at least three houses for the last edge.
\end{proof}

Due to Theorem~\ref{thm:bivalued-warmup} and the fact that the graph with odd number of vertices and maximum degree one has at least one isolated vertex, we obtain the following.
\begin{corollary}\label{cor:degree1}
     There exists a polynomial-time algorithm that, given a binary instance $(G = (V, E), H, \LR{\val_a}_{a \in V})$ \gha where $G$ has odd number of vertices and maximum degree at most one, computes an envy-free allocation, which is guaranteed to exist.
\end{corollary}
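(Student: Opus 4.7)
The plan is to derive the corollary as an almost immediate consequence of Theorem~\ref{thm:bivalued-warmup}, by first arguing that an odd-vertex graph of maximum degree at most one must contain an isolated vertex, and then invoking item~2 of that theorem.

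First I would establish the structural observation: since the social graph $G$ has maximum degree at most one, its edge set is a matching, and each edge covers exactly two vertices. Hence the set of non-isolated vertices has even cardinality. Because $|V|$ is odd by hypothesis, the set $V^{\text{iso}}$ of isolated vertices must be nonempty. This is the only content beyond invoking the theorem, and it is a purely counting argument with no real obstacle.

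Next I would apply Theorem~\ref{thm:bivalued-warmup} directly. Since the instance is binary and $G$ has maximum degree at most one, the hypotheses of the theorem are met, and the algorithm runs in polynomial time. The existence of an isolated vertex places us in case~2 of the theorem's guarantee, so the returned allocation $\pi$ is envy-free. This simultaneously shows that an envy-free allocation exists and that it is computed in polynomial time, which is exactly the statement of the corollary.

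There is no substantive obstacle here; the only thing worth emphasizing in the write-up is that the parity assumption is used solely to force $V^{\text{iso}}\neq\emptyset$, which in turn activates the stronger conclusion of Theorem~\ref{thm:bivalued-warmup} (namely, envy-freeness rather than at most one envious adjacent pair). The proof therefore fits in a few lines.
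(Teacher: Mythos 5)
Your proposal is correct and matches the paper's own derivation exactly: the paper also observes that a matching covers an even number of vertices, so an odd vertex count forces $V^{\text{iso}}\neq\emptyset$, and then invokes item~2 of Theorem~\ref{thm:bivalued-warmup}. Nothing further is needed.
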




The following example shows that there are instances with even number of agents that has envy one. Consider 
the following example. The social graph is a matching on four agents, with edges $(a_1,a_2)$ and $(a_3,a_4)$. The valuation profile is shown below.

\[
\begin{array}{c|cccc}
      & h_1 & h_2 & h_3 & h_4 \\
\hline
a_1   & 0   & 1   & 0   & 1 \\
a_2   & 0   & 1   & 0   & 1 \\
a_3   & 0   & 0   & 1   & 1 \\
a_4   & 0   & 0   & 1   & 1 \\
\end{array}
\]

In this example, envy of every allocation is at least one.
In fact, even for identical valuations, envy-free solution need not exist: suppose all but one house is valued $1$ by all the agents, one house is valued $0$.  

\section{\ghafull for \(\ell\) Types of Houses}

 \gha  is known to be \NP-hard in general~\cite{DBLP:conf/atal/HosseiniPSVV23}.
In this section, we study instances of \gha in which the set of houses is partitioned
into \(\ell\) distinct types.
Throughout this section, we treat \(\ell\) as a fixed constant and focus on the
parameterized complexity of the problem with respect to structural parameters of
the social graph.
Under this assumption, we design fixed-parameter tractable algorithms for \gha.
As corollaries of our results, we also obtain single-exponential time algorithms for several sparse graph classes, even in the extreme case where all houses are distinct, that is, when \(\ell = n\). We start by showing the following result, which shows that, the problem remains intractable for two house types, when the social graph is arbitrary. As discussed in Section~\ref{section:results}, this result is due to a reduction similar to~\cite[Theorem 3.2]{DBLP:conf/atal/HosseiniPSVV23}. 

\begin{restatable}{theorem}{thmlnphard}
\label{thm:me-gha-l2-nphard}
\gha is \NP-hard even when $\ell=2$.
\end{restatable}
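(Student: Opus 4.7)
The plan is to give a polynomial-time reduction from the classical \NP-hard \textsc{Min-Bisection} problem: given a graph $H = (V, E)$ with $|V| = 2n$ and an integer $k$, decide whether $V$ admits a partition into two sets of size $n$ each such that at most $k$ edges cross the partition. This is a natural target because the authors' hint, i.e., that half the values are set to $0$ and half to $1$, already suggests that the optimal envy in the constructed instance should coincide with a balanced cut value of the social graph.

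Given an instance $(H, k)$ of \textsc{Min-Bisection}, I would construct a \gha instance as follows. Take the social graph $G := H$, introduce a set $\mathcal{H}$ of $2n$ houses partitioned into two types, say $n$ houses of type $A$ and $n$ houses of type $B$, and assign every agent the identical binary valuation $\val(h) = 0$ if $h$ is of type $A$ and $\val(h) = 1$ if $h$ is of type $B$. By construction $\ell = 2$ and the reduction is polynomial. For any allocation $\pi$ and any edge $(u,v) \in E(G)$, since valuations are identical and binary, the envy across $(u,v)$ simplifies to
\[
\envy(u,v,\pi(u),\pi(v)) \;=\; \lvert \val(\pi(u)) - \val(\pi(v)) \rvert,
\]
which equals $1$ precisely when the two endpoints receive houses of different types, and $0$ otherwise.

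Next, I would establish a bijective correspondence between allocations and balanced bipartitions of $V$: each allocation $\pi$ induces the partition $(V_A, V_B)$ where $V_A = \pi^{-1}(\text{type-}A\text{ houses})$, and conversely any balanced bipartition is realized by some allocation by arbitrarily matching each side to houses of the corresponding type. Under this correspondence the total envy $\envy(\pi)$ equals exactly the number of edges in the cut between $V_A$ and $V_B$. Hence $H$ has a bisection of width at most $k$ if and only if the \gha instance admits an allocation with envy at most $k$, which proves \NP-hardness for $\ell = 2$.

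There is no serious technical obstacle here; the construction is transparent once \textsc{Min-Bisection} is chosen as the source, and the only thing to verify carefully is that the cost is preserved under the correspondence, which is immediate from binary identical valuations. The only mild point of care is to ensure the house-type multiplicities are set to exactly $n$ and $n$ so that the induced partition is automatically balanced; this is built into the construction.
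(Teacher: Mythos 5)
Your proposal is correct and follows essentially the same route as the paper: a reduction from \textsc{Minimum Bisection} with the social graph unchanged, half the houses valued $0$ and half valued $1$ under identical valuations, so that the envy of an allocation equals the cut size of the induced balanced bipartition. The only cosmetic difference is your normalization $|V|=2n$ versus the paper's $|V|=n$ with $n/2$ houses per type.
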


\begin{proof}
We give a polynomial-time reduction from {\sc Minimum Bisection} problem, which is
\NP-hard.

\paragraph{Source problem definition.}
Let $G=(V,E)$ be an undirected graph with $|V|=n$, where $n$ is even.
The Minimum Bisection problem asks for a partition
$(V_0,V_1)$ of $V$ such that $|V_0|=|V_1|=n/2$ and the number of edges
crossing between $V_0$ and $V_1$ is minimized.

\paragraph{Construction.}
We construct an instance of ME-GHA as follows.

\textbf{Agents and social graph.}
For each vertex $v\in V$, create an agent $a_v$.
The social graph is same as the original graph.

\textbf{Houses.}
Create $n$ houses
\[
\mathcal H=\{h_1,\dots,h_n\},
\]
partitioned into two types:
\[
\mathcal H_0=\{h_1,\dots,h_{n/2}\}, \qquad
\mathcal H_1=\{h_{n/2+1},\dots,h_n\}.
\]

\textbf{Valuations.}
All agents have identical valuations
\[
v(h)=
\begin{cases}
0 & \text{if } h\in\mathcal H_0,\\
1 & \text{if } h\in\mathcal H_1.
\end{cases}
\]
Thus, $\ell=2$.

\paragraph{Forward direction.}
Let $(V_0,V_1)$ be a bisection of $G$.
We construct an allocation $\pi$ as follows:
assign each agent $a_v$ with $v\in V_1$ a distinct house from $\mathcal H_1$,
and each agent $a_v$ with $v\in V_0$ a distinct house from $\mathcal H_0$.
This is feasible since both sets have size $n/2$.

Consider an edge $\{u,v\}\in E$.
\begin{itemize}
    \item If $u,v\in V_0$ or $u,v\in V_1$, then both agents receive houses
    of equal value, so no envy arises.
    \item If $u\in V_0$ and $v\in V_1$ (or vice versa), then the agent in $V_0$
    envies the agent in $V_1$ by exactly one unit, while the reverse envy is zero.
\end{itemize}
Hence, each edge crossing $(V_0,V_1)$ contributes exactly one unit of envy.
Therefore,
\[
{\sf envy}(\pi)=|\{\{u,v\}\in E : u\in V_0,\,v\in V_1\}|.
\]

\paragraph{Backward direction.}
Let $\pi$ be any feasible allocation.
Since $|\mathcal H_0|=|\mathcal H_1|=n/2$, exactly $n/2$ agents receive value $1$.
Define
\[
V_1=\{a_v : v(\pi(a_v))=1\},\qquad
V_0=\{a_v : v(\pi(a_v))=0\}.
\]
Then $(V_0,V_1)$ is a bisection of $G$.

Using the same envy analysis as above, every edge with one endpoint in $V_0$
and the other in $V_1$ contributes exactly one unit of envy, and no other edges
contribute.
Thus,
\[
{\sf envy(\pi)}=|\{\{u,v\}\in E : u\in V_0,\,v\in V_1\}|.
\]

\paragraph{Correctness.}
The above establishes a one-to-one correspondence between bisections of $H$
and feasible allocations, preserving objective values.
Hence, an optimal allocation corresponds to a minimum bisection and vice versa.

Since the reduction is polynomial-time and Minimum Bisection is \NP-hard,
ME-GHA is \NP-hard even when $\ell=2$.
\end{proof}

This result shows that in order to design tractable algorithms parameterized by house types, we need to impose structural restrictions on the social graph. In the following subsections, we design parameterized algorithms for different restricted graph classes.

\subsection{Using Treewidth as a Parameter}
\label{subsec:treewidth}

Eiben et al.~\shortcite{DBLP:journals/ai/EibenGHO23} studied the problem of computing
envy-free allocations parameterized by the number of item types and the treewidth of the social graph. Their algorithm, however, exhibits an \xp dependence on the number of item types $\ell$ as well as the treewidth of the social graph. Notice that, when $\ell$ is a constant, their algorithm is still \xp in the treewidth. In contrast, our algorithm is \xp w.r.t.~$\ell$, but when $\ell$ is a constant, it is $\fpt$ w.r.t.~the treewidth of
the social graph.

We now state our main result for this section.

\begin{restatable}{theorem}{thmtw}
\label{thm:tw}
\gha can be solved in 
$\mathcal{O^*}\!\left(
n \cdot \ell^{\mathrm{tw}} \cdot \left(1 + \frac{n}{\ell}\right)^{2(\ell - 1)}
\right)$
time, where \(\mathrm{tw}\) denotes the treewidth of the social graph.
\end{restatable}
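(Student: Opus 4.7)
My plan is to design a dynamic programming (DP) algorithm over a nice tree decomposition of the social graph $G$ (Definition~\ref{def:nicetree}), with a state that tracks both the type assigned to each vertex currently in the bag and the multiplicity of each type of house already used in the subtree below. The key enabling observation is that houses of the same type are interchangeable: for each agent $v$ and each type $j \in \{1, \ldots, \ell\}$, the quantity $\val_v(j) := \val_v(h)$ is well-defined for any house $h$ of type $j$. Hence the envy contribution of an edge $(u,v)$ depends only on the types assigned to $u$ and $v$ and is computable in $O(1)$ time from the agents' valuations. I would first obtain a nice tree decomposition of width $\mathrm{tw}$ with $\mathcal{O}(n\cdot \mathrm{tw})$ nodes using the standard transformation from any optimal tree decomposition.

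The DP state at a node $t$ would be a pair $(\tau, \mathbf{c})$, where $\tau : B_t \to \{1, \ldots, \ell\}$ gives the type assigned to each bag vertex, and $\mathbf{c}=(c_1,\ldots,c_\ell)$ records the number of houses of each type used among the vertices of $V_t$. The value $D_t[\tau, \mathbf{c}]$ is defined as the minimum envy on the edges of $G[V_t]$ taken over partial allocations consistent with $(\tau, \mathbf{c})$. The transitions follow the usual template: at an introduce node adding $v$, I would guess $\tau(v) = j$, increment $c_j$, and add the envy from every edge $(v,u)$ with $u$ still in the bag; at a forget node removing $v$, I would take the minimum over the child's value of $\tau(v)$; at a join node, the two children share the same $\tau$, and for each target $\mathbf{c}$ I would iterate over $\mathbf{c}^{(1)}$ and set $\mathbf{c}^{(2)} = \mathbf{c} - \mathbf{c}^{(1)} + \mathbf{b}(\tau)$, where $\mathbf{b}(\tau)_j$ counts bag vertices assigned type $j$, and take the sum of the two child values. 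The correction $\mathbf{b}(\tau)$ is crucial because the bag vertices lie in both $V_{t_1}$ and $V_{t_2}$ and their house usage would otherwise be double-counted. The optimum is read at the root as $D_{\mathrm{root}}[\emptyset, (n_1, \ldots, n_\ell)]$, where $n_j$ is the total number of houses of type $j$.

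For the running time, I would observe that each bag admits at most $\ell^{\mathrm{tw}+1}$ type assignments. The number of feasible count tuples $\mathbf{c}$ is bounded by $\prod_{j=1}^{\ell}(n_j + 1)$, which by AM--GM under the constraint $\sum_j n_j = n$ is at most $(1 + n/\ell)^{\ell}$; since $\sum_j c_j = |V_t|$ is determined by the subtree, one coordinate of $\mathbf{c}$ is forced, leaving $\mathcal{O}\bigl((1 + n/\ell)^{\ell-1}\bigr)$ distinct feasible tuples. Leaf, introduce, and forget nodes would each cost $\mathcal{O}^\star\bigl(\ell^{\mathrm{tw}+1} \cdot (1 + n/\ell)^{\ell-1}\bigr)$, while the join node is the bottleneck: enumerating the pair $(\mathbf{c}^{(1)}, \mathbf{c}^{(2)})$ independently costs $\mathcal{O}\bigl((1 + n/\ell)^{2(\ell-1)}\bigr)$ per fixed $\tau$. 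Multiplying by the $\mathcal{O}(n \cdot \mathrm{tw})$ nodes of the nice decomposition yields the stated bound.

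The main obstacle will be a careful correctness argument for the join step: I must show that every edge of $G[V_t]$ contributes to exactly one child's accumulated envy. This follows from the third axiom of tree decompositions, which guarantees that the bags containing any given vertex form a connected subtree; consequently, any edge $(u,v)\in E(G[V_t])$ is introduced in exactly one of the two subtrees hanging off the join, so its envy is counted once. Combined with the count correction $\mathbf{b}(\tau)$, this justifies the simple additive combination of child DP values and establishes $D_t[\tau,\mathbf{c}] = \min_{\mathbf{c}^{(1)}} \bigl(D_{t_1}[\tau,\mathbf{c}^{(1)}] + D_{t_2}[\tau,\mathbf{c}^{(2)}]\bigr)$ with $\mathbf{c}^{(2)} = \mathbf{c}-\mathbf{c}^{(1)}+\mathbf{b}(\tau)$. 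The remaining verifications at introduce and forget nodes are routine.
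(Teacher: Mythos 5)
Your overall approach is the same as the paper's: a DP over a nice tree decomposition whose state records the type assigned to each bag vertex together with the vector of type-multiplicities used in the subtree, with the same state-counting and the same identification of the join node as the bottleneck giving the $\left(1+\frac{n}{\ell}\right)^{2(\ell-1)}$ factor. However, your join recurrence has a genuine error. You define $D_t[\tau,\mathbf{c}]$ as the minimum envy over \emph{all} edges of $G[V_t]$, and then combine children by $D_{t_1}[\tau,\mathbf{c}^{(1)}]+D_{t_2}[\tau,\mathbf{c}^{(2)}]$. Consider an edge $(u,v)$ with both endpoints in the join bag $B_t=B_{t_1}=B_{t_2}$. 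Both $u$ and $v$ belong to $V_{t_1}$ and to $V_{t_2}$, so this edge lies in $G[V_{t_1}]$ \emph{and} in $G[V_{t_2}]$; under your semantic definition its envy is included in both child values, and your sum double-counts it. Your justification that ``any edge of $G[V_t]$ is introduced in exactly one of the two subtrees'' is false precisely for these bag-internal edges: in a nice tree decomposition without introduce-edge nodes, each of $u,v$ is introduced somewhere below $t_1$ and also somewhere below $t_2$, and the standard introduce-vertex transition charges the edge in both subtrees. (The statement is correct only for edges with an endpoint in $V_{t_1}\setminus B_t$ or $V_{t_2}\setminus B_t$, since such a vertex cannot appear in the other subtree by the connectivity axiom.)

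The fix is exactly what the paper does: at a join node, subtract the quantity
\[
\mathrm{Envy}(B_t,\tau)\;=\;\sum_{\{u,v\}\in E(G[B_t])}\envy\bigl(u,v,\tau(u),\tau(v)\bigr),
\]
which is determined by $\tau$ alone, from the sum of the two child values. With this correction the additive combination is valid, and the rest of your argument (the count correction $\mathbf{b}(\tau)$, the introduce/forget transitions, and the running-time analysis) matches the paper and goes through.
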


\begin{proof}
    We first fix notation and recall standard properties of nice tree decompositions.

\paragraph{Tree Decomposition.}
Let $G=(V,E)$ be the underlying graph.
A \emph{nice tree decomposition} of $G$ is a rooted tree $\mathcal{T}$ whose nodes are indexed by $i$, and where each node $i$ is associated with a bag $B_i \subseteq V$.  
For each node $i$, let $V_i$ denote the set of vertices appearing in the subtree rooted at $i$, i.e.,
\[
V_i = \bigcup_{j \text{ in the subtree of } i} B_j.
\]

We assume that $\mathcal{T}$ is a nice tree decomposition, meaning that every node is of one of the following types:

\begin{itemize}
    \item \textbf{Leaf node:} $B_i = \emptyset$.
    \item \textbf{Introduce node:} $i$ has one child $i'$ and $B_i = B_{i'} \cup \{v\}$ for some vertex $v$.
    \item \textbf{Forget node:} $i$ has one child $i'$ and $B_i = B_{i'} \setminus \{v\}$ for some vertex $v$.
    \item \textbf{Join node:} $i$ has two children $h,j$ and $B_i = B_h = B_j$.
\end{itemize}

A nice tree decomposition of width $\mathrm{tw}$ can be computed in
$f(\mathrm{tw}) \cdot n$ time using standard algorithms, and contains $O(n)$ nodes.  
This preprocessing step does not affect the final parameterized running time.

\medskip

\paragraph{Compatibility of allocations.}
Let $i$ be a node of the nice tree decomposition and let
$
(f,t_1,\dots,t_\ell)\in \mathcal{F}(i)
$
be a valid tuple.  
An allocation $\Psi : V_i \to [\ell]$ is \emph{compatible} with $(f,t_1,\dots,t_\ell)$ if:
\begin{itemize}
    \item $\Psi(v)=f(v)$ for every $v\in B_i$;
    \item for every $k\in[\ell]$, we have $|\Psi^{-1}(k)| = t_k$.
\end{itemize}

Let
$
O(i,f,t_1,\dots,t_\ell)
$
denote the minimum total envy achievable by any allocation on $G[V_i]$ that is compatible with $(f,t_1,\dots,t_\ell)$.

\paragraph{DP Table.}
We define a DP table $T[i, f, t_1, t_2, \dots, t_{\ell}]$, where the index \(i\) refers to a node of the tree decomposition. Here, $i, f$ are as defined above, and $(f, t_1, t_2, \ldots, t_\ell) \in \mathcal{F}(i)$. The entry stores the minimum envy of an allocation when for each $j \in [\ell]$, exactly $t_j$ houses of type $j$ are assigned to the vertices in the subtree rooted at $i$, and the vertices in the bag $B_i$ are assigned houses according to the function $f$.
\\
We prove correctness by induction on the tree decomposition.
\\
\\
\textbf{Induction Hypothesis}

For every child node $i'$ of $i$ and every valid tuple
$(f',t_1',\dots,t_\ell') \in \mathcal{F}(i')$,
\[
T[i',f',t_1',\dots,t_\ell']
=
O(i',f',t_1',\dots,t_\ell').
\]

\paragraph{Goal}

Fix a valid tuple $(f,t_1,\dots,t_\ell)\in\mathcal{F}(i)$ and let
\[
T = T[i,f,t_1,\dots,t_\ell], \qquad
O = O(i,f,t_1,\dots,t_\ell).
\]

We show:
\[
T \ge O \quad\text{and}\quad T \le O.
\]

\hfill
\\
\textbf{Base Case (Leaf Node)}

If $i$ is a leaf, then $B_i=\emptyset$ and $V_i=\emptyset$.
There is exactly one valid tuple and no edges, hence no envy:
\[
T[i,f,t_1,\dots,t_\ell]=0=O(i,f,t_1,\dots,t_\ell).
\]

\hfill
\\
\textbf{Introduce Node}

Suppose $i$ is an introduce node with child $i'$ and
\[
B_i = B_{i'} \cup \{v\}.
\]

Let $(f,t_1,\dots,t_\ell)\in\mathcal{F}(i)$.
Define:
\[
f' = f|_{B_{i'}}, \qquad
t_k' =
\begin{cases}
t_k - 1 & \text{if } f(v)=k,\\
t_k & \text{otherwise}.
\end{cases}
\]

The DP recurrence is:
\[
\begin{aligned}
T[i,f,t_1,\dots,t_\ell]
=
T[i',f',t_1',\dots,t_\ell'] 
 +
\sum_{u\in N(v)\cap B_i}
\envy(v,u,f(v),f(u)).
\end{aligned}
\]

\paragraph{Part 1: showing \(T \ge O\)}

By the induction hypothesis, there exists a compatible allocation
$\Psi'$ for $(f',t'_1,\dots,t'_\ell)$ achieving
\[
\envy(\Psi') = T[i',f',t'_1,\dots,t'_\ell].
\]

Define $\Psi$ on $V_i$ by
\[
\Psi(w)=
\begin{cases}
\Psi'(w), & w\in V_{i'},\\
f(v), & w=v.
\end{cases}
\]

This allocation is compatible with $(f,t_1,\dots,t_\ell)$.
The only newly created envy edges are those incident to $v$ whose other endpoint lies in $B_i$.
Hence,
\[
\envy(\Psi)
=
\envy(\Psi')
+
\sum_{u\in N(v)\cap B_i}
\envy(v,u,f(v),f(u)).
\]

By definition of the DP transition,
\[
\envy(\Psi)=T[i,f,t_1,\dots,t_\ell].
\]
Since $O$ is the minimum achievable envy for this tuple, we obtain
\[
O \le T.
\]

\paragraph{Part 2:  showing  \(T \le O\)}

Assume for contradiction that $O < T$.
Let $\Psi^*$ be an optimal allocation compatible with $(f,t_1,\dots,t_\ell)$, so that
$\envy(\Psi^*) = O$.

Restrict $\Psi^*$ to $V_{i'}$ and denote the restriction by $\Psi'$.
Then $\Psi'$ is compatible with $(f',t'_1,\dots,t'_\ell)$.

The only envy terms lost when removing $v$ are those involving $v$, hence
\[
\begin{aligned}
\envy(\Psi')
&=
\envy(\Psi^*)
-
\sum_{u \in N(v)\cap B_i}
\envy(v,u,f(v),f(u)).
\end{aligned}
\]

Therefore,

Let $N_i(v) := N(v)\cap B_i$.
\[
\begin{aligned}
\envy(\Psi')
&<
T[i,f,t_1,\dots,t_\ell]
-
\sum_{u\in N_i(v)}
\envy(v,u,f(v),f(u)) \\
&=
T[i',f',t'_1,\dots,t'_\ell].
\end{aligned}
\]

This contradicts the induction hypothesis that
\[
T[i',f',t'_1,\dots,t'_\ell]
= O(i',f',t'_1,\dots,t'_\ell)
\]
is minimal. Hence $T \le O$.

\hfill
\\
\textbf{Forget Node}

Suppose $i$ is a forget node with child $i'$ and
\[
B_i = B_{i'} \setminus \{v\}.
\]

For any $(f,t_1,\dots,t_\ell)\in\mathcal{F}(i)$, define
\[
T[i,f,t_1,\dots,t_\ell]
=
\min_{\substack{f' : B_{i'}\to[\ell]\\ f'|_{B_i}=f}}
T[i',f',t_1,\dots,t_\ell].
\]

(Note that the counters $t_k$ do not change, since forgetting a vertex only removes it from the bag.)

\paragraph{Part 1:  showing \(T \ge O\)}

By the induction hypothesis, for each feasible extension $f'$ there exists an allocation $\Psi'$ compatible with $(f',t_1,\dots,t_\ell)$ attaining the value
$T[i',f',t_1,\dots,t_\ell]$.

Restricting $\Psi'$ to $V_i$ yields an allocation compatible with $(f,t_1,\dots,t_\ell)$, and forgetting a vertex does not change any envy value inside $G[V_i]$.

Thus, the DP value is achievable, and
\[
O \le T.
\]

\paragraph{Part 2:  showing \(T \le O\)}

Suppose $O < T$, and let $\Psi^*$ be an optimal allocation compatible with $(f,t_1,\dots,t_\ell)$.

Extend $\Psi^*$ arbitrarily to vertex $v$, obtaining an assignment $f'$ on $B_{i'}$.
This yields a compatible tuple $(f',t_1,\dots,t_\ell)$ at node $i'$.

Since removing or adding $v$ does not affect envy inside $G[V_i]$,
\[
\envy(\Psi^*) = \envy(\Psi').
\]

Hence,
\[
\envy(\Psi') < T[i,f,t_1,\dots,t_\ell]
\le T[i',f',t_1,\dots,t_\ell],
\]
contradicting the induction hypothesis.
Therefore $T \le O$.

\hfill
\\
\textbf{Join Node}

Let $i$ be a join node with children $h$ and $j$, where
\[
B_i = B_h = B_j.
\]

For a tuple $(f,t_1,\dots,t_\ell)\in\mathcal{F}(i)$, define
\[
b_k = |\{v\in B_i : f(v)=k\}|.
\]

The DP transition is:
\[
\begin{aligned}
T[i,f,\mathbf t]
=
\min_{\substack{\mathbf t^h,\mathbf t^j\\
t_k = t_k^h + t_k^j - b_k}}
\Big(
& T[h,f,\mathbf t^h]
+ T[j,f,\mathbf t^j] 
- \text{Envy}(B_i,f)
\Big),
\end{aligned}
\]
where
\[
\begin{aligned}
\text{Envy}(B_i,f)
=
\sum_{\{u,v\}\in E(G[B_i])}
\Big(
\envy(u,v,f(u),f(v)) 
+ \envy(v,u,f(v),f(u))
\Big).
\end{aligned}
\]

\paragraph{ Part 1:  showing  \(T \ge O\)}

By induction, there exist compatible allocations $\Psi_h,\Psi_j$ achieving
$T[h,f,\mathbf t^h]$ and $T[j,f,\mathbf t^j]$.

Because both agree with $f$ on $B_i$, they can be combined into an allocation
$\Psi_i$ on $V_i$.

The total envy satisfies:
\[
\envy(\Psi_i)
=
\envy(\Psi_h)
+
\envy(\Psi_j)
-
\text{Envy}(B_i,f),
\]
since edges internal to $B_i$ are counted twice.

Thus,
\[
\envy(\Psi_i)
=
T[h,f,\mathbf t^h]
+
T[j,f,\mathbf t^j]
-
\text{Envy}(B_i,f),
\]
which is exactly one candidate value in the minimization defining $T[i,f,\mathbf t]$.
Hence $O \le T$.

\paragraph{ Part 2:  showing \(T \le O\) }

Assume $T > O$ and let $\Psi^*$ be an optimal allocation compatible with $(f,\mathbf t)$.

Restrict $\Psi^*$ to $V_h$ and $V_j$, obtaining allocations $\Psi_h^*$ and $\Psi_j^*$.
These are compatible with tuples $(f,\mathbf t^h)$ and $(f,\mathbf t^j)$ satisfying
\[
t_k = t_k^h + t_k^j - b_k.
\]

By induction,
\[
\envy(\Psi_h^*) \ge T[h,f,\mathbf t^h], \qquad
\envy(\Psi_j^*) \ge T[j,f,\mathbf t^j].
\]

Since envy on edges inside $B_i$ is counted twice,
\[
\envy(\Psi^*)
=
\envy(\Psi_h^*)
+
\envy(\Psi_j^*)
-
\text{Envy}(B_i,f).
\]

Combining inequalities,
\[
\envy(\Psi^*)
\ge
T[h,f,\mathbf t^h]
+
T[j,f,\mathbf t^j]
-
\text{Envy}(B_i,f)
\ge
T[i,f,\mathbf t],
\]
contradicting $T > O$. Thus, $T \le O$.

\paragraph{Conclusion.}
For every node $i$ and every valid tuple $(f,t_1,\dots,t_\ell)$,
\[
T[i,f,t_1,\dots,t_\ell] = O(i,f,t_1,\dots,t_\ell).
\]
This completes the proof.

\paragraph{Running Time.}
A nice tree decomposition of width $\mathrm{tw}$ can be computed in
$f(\mathrm{tw})\cdot n$ time.

For each bag, the number of assignments of $f$ is at most $\ell^{\mathrm{tw}+1}$.
The number of feasible vectors $(t_1,\dots,t_\ell)$ is
\[
\mathcal{O}\!\left(\left(1+\frac{n}{\ell}\right)^\ell\right).
\]

Join nodes require iterating over pairs of such vectors, giving total time
\[
\mathcal{O}^*\!\left(
\ell^{\mathrm{tw}}
\cdot
\left(1+\frac{n}{\ell}\right)^{2(\ell-1)}
\right).
\]

\end{proof}

Due to Theorem~\ref{thm:tw} and the fact that $\ell \leq n$, we have the following result.

\begin{corollary}
\label{cor:constwidth}
When the treewidth of the social graph is constant, \gha can be solved in
\(\mathcal{O}^\star(4^n)\) time; in particular, this includes cycles, paths, and forests.
\end{corollary}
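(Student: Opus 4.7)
The plan is to obtain the claimed bound by a direct substitution into the running-time expression of Theorem~\ref{thm:tw}. Specifically, I would set $\ell = n$, which is always valid: in the absence of any natural type structure, every house can be regarded as its own distinct type, so the theorem still applies. Plugging $\ell = n$ into the bound $\mathcal{O}^\star\!\left(n \cdot \ell^{\mathrm{tw}} \cdot (1 + n/\ell)^{2(\ell-1)}\right)$ yields $\mathcal{O}^\star\!\left(n^{\mathrm{tw}+1} \cdot 2^{2(n-1)}\right) = \mathcal{O}^\star\!\left(n^{\mathrm{tw}+1} \cdot 4^{n-1}\right)$, since $1 + n/n = 2$ and $2^{2(n-1)} = 4^{n-1}$.

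Next, I would invoke the constant-treewidth hypothesis. When $\mathrm{tw} = \mathcal{O}(1)$, the factor $n^{\mathrm{tw}+1}$ is polynomial in $n$ and is absorbed by the $\mathcal{O}^\star$ notation, giving the desired $\mathcal{O}^\star(4^n)$ bound. To conclude, I would verify the concrete consequences: paths and forests have treewidth at most $1$, and cycles have treewidth $2$, so each of the listed graph classes satisfies the constant-treewidth hypothesis and inherits the $\mathcal{O}^\star(4^n)$ running time.

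There is no real obstacle here: the corollary is a direct arithmetic consequence of Theorem~\ref{thm:tw}, with the only substantive observation being that evaluating the factor $(1 + n/\ell)^{2(\ell-1)}$ at the extreme choice $\ell = n$ produces the base $4$, matching the target running time.
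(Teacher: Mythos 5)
Your proposal is correct and matches the paper's own derivation: the corollary is obtained exactly by substituting $\ell = n$ (treating every house as its own type) into the running time of Theorem~\ref{thm:tw}, which gives $n^{\mathrm{tw}+1}\cdot 4^{n-1}$ and hence $\mathcal{O}^\star(4^n)$ for constant treewidth. Your verification that paths and forests have treewidth at most $1$ and cycles have treewidth $2$ is the same routine check the paper implicitly relies on.
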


\begin{definition}[\(f(n)\)-Balanced-Separable Graph Classes]
For a function $f: \mathbb{N} \to \mathbb{N}$,
A graph class \(\mathcal{G}\) is said to be \emph{\(f(n)\)-balanced-separable} if for every graph $G \in \mathcal{G}$ with $n$ vertices, (i) all induced subgraphs of $G$ also belong to \(\mathcal{G}\), and (ii) there exists a partition
\((S,V_1,V_2)\) of \(V(G)\) such that:
\begin{itemize}
    \item \(|S| \le f(n)\),
    \item \(|V_1|, |V_2| \le \frac{2n}{3}\),
    \item there are no edges between \(V_1\) and \(V_2\).
\end{itemize}
\end{definition}

Planar graphs are a classical example of an \(\mathcal{O}(\sqrt{n})\)-balanced-separable graph
class.
Moreover, it is well known that if a graph belongs to an \(f(n)\)-balanced-separable
class, then its treewidth is bounded by \(\mathcal{O}(f(n))\)~\cite{cygan2015parameterized}.
This yields the following corollary.

\begin{corollary}
\label{cor:balsep}
If the social graph belongs to an \(f(n)\)-balanced-separable class, then \gha can be
solved in
\(\mathcal{O}^\star\!\left(4^n \cdot 2^{\mathcal{O}(f(n)\log n)}\right)\) time.
In particular, if the social graph is planar, then \gha can be solved in
\(\mathcal{O}^\star(4^{n+o(n)})\) time.
\end{corollary}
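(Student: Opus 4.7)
The plan is to obtain Corollary~\ref{cor:balsep} directly from Theorem~\ref{thm:tw} by exploiting the treewidth bound implied by $f(n)$-balanced-separability. First, I would invoke the structural fact already cited in the excerpt (from \cite{cygan2015parameterized}) that every graph in an $f(n)$-balanced-separable class has treewidth $\mathrm{tw} = \mathcal{O}(f(n))$; this follows by recursively applying the balanced separator to both sides and using the hereditary property of the class, since each level contributes a bag of size at most $f(n)$ and the recursion has depth $\mathcal{O}(\log n)$.

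Next, since the corollary makes no assumption on a bounded number of house types, I would apply Theorem~\ref{thm:tw} in the worst-case setting $\ell = n$ (each house is its own type). Plugging $\ell = n$ into the running time
\[
\mathcal{O}^\star\!\left(n \cdot \ell^{\mathrm{tw}} \cdot \left(1 + \tfrac{n}{\ell}\right)^{2(\ell-1)}\right),
\]
the third factor becomes $2^{2(n-1)} = \mathcal{O}(4^n)$, while the factor $\ell^{\mathrm{tw}}$ becomes $n^{\mathcal{O}(f(n))} = 2^{\mathcal{O}(f(n)\log n)}$. Multiplying these together and absorbing the polynomial factor $n$ into $\mathcal{O}^\star$ yields the claimed bound $\mathcal{O}^\star\!\left(4^n \cdot 2^{\mathcal{O}(f(n)\log n)}\right)$.

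For the planar-graph case, I would use the classical planar separator theorem, which guarantees that planar graphs form an $\mathcal{O}(\sqrt{n})$-balanced-separable class. Substituting $f(n) = \mathcal{O}(\sqrt{n})$ into the general bound gives an overhead of $2^{\mathcal{O}(\sqrt{n}\log n)} = 2^{o(n)}$, which combines with $4^n$ to give $\mathcal{O}^\star(4^{n+o(n)})$.

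The argument is essentially a substitution, so there is no serious obstacle; the only subtlety worth double-checking is that a tree decomposition of width $\mathcal{O}(f(n))$ for the input graph can be computed within the claimed time budget. This is standard: one builds it top-down by repeatedly invoking the balanced separator of the class on each induced subgraph, recursing on the two sides of size at most $2n/3$; the resulting decomposition has width $\mathcal{O}(f(n))$ and the construction fits comfortably within the $2^{\mathcal{O}(f(n)\log n)}$ overhead already present in the final bound. Once this decomposition is available, Theorem~\ref{thm:tw} applies verbatim and the corollary follows.
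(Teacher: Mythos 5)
Your proposal is correct and follows essentially the same route as the paper: bound the treewidth by \(\mathcal{O}(f(n))\) via the balanced-separator property, then instantiate Theorem~\ref{thm:tw} with \(\ell = n\) so that \(\lr{1+\nicefrac{n}{\ell}}^{2(\ell-1)} = \mathcal{O}(4^n)\) and \(\ell^{\mathrm{tw}} = 2^{\mathcal{O}(f(n)\log n)}\), with planarity giving \(f(n)=\mathcal{O}(\sqrt{n})\). Your added remark on constructing the tree decomposition within the stated time budget is a sensible detail the paper leaves implicit, but it does not change the argument.
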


\subsection{\gha on Graphs with Small Structural Modulators}

\paragraph{Note on Minimum Weight Perfect Matching in bipartite graph}
Let \(G=(V,E)\) be an undirected graph. A \emph{perfect matching} is a subset \(M\subseteq E\) such that every vertex \(v\in V\) is incident to exactly one edge in \(M\).  Equivalently, \(M\) covers all vertices and \(\lvert M\rvert = \lvert V\rvert/2\)

In the \emph{minimum weight perfect matching} (MWPM) problem, each edge \(e\in E\) has a real weight \(c(e)\).  The goal is to find a perfect matching \(M^*\) minimizing the total weight
\[
  c(M)\;=\;\sum_{e\in M}c(e).
\]
Formally:
\begin{itemize}
  \item \textbf{Input:} A weighted graph \((V,E,c)\) with weight function \(c:E\to\mathbb{R}\).
  \item \textbf{Output:} A perfect matching \(M^*\subseteq E\) such that 
    \(\displaystyle M^* = \arg\min_{M\ \text{perfect matching}} \sum_{e\in M}c(e).\)
\end{itemize}

The \emph{Hungarian algorithm} ~\cite{DBLP:books/daglib/p/Kuhn10} solves it in \(\mathcal{O}(n^3)\) time. 

We begin with defining modulators. 

\paragraph{Modulators.}
Let \(G=(V,E)\) be a graph and let \(\mathcal{G}\) be a graph class.
A set \(S\subseteq V\) is called a \emph{modulator to \(\mathcal{G}\)} if
\(G-S \in \mathcal{G}\).
If \(|S|\le k\), we say that \(S\) is a \emph{small modulator} (of size \(k\)).

Note that vertex cover is a modulator to edgeless graphs. Modulator to \(\mathcal{G}\) can also be thought of as a distance from the graph class \(\mathcal{G}\). In the parameterized complexity, the distance from the polynomial-time solvable instances is a natural and well-studied parameter~\cite{10.1007/978-3-031-06678-8_1,10.1007/978-3-540-28639-4_15}. In this paper, we consider two simple graph classes that are polynomial-time solvable: edgeless graphs and complete graphs.


\subsubsection{Vertex Cover as a Parameter (Modulator to edgeless graphs)}

Let \(X\subseteq V\) be a vertex cover of \(G\). Then the graph \(G-X\) is edgeless. 

\begin{restatable}{theorem}{thmghavc}
\label{thm:gha-vc-typed}
\gha can be solved in
\(\mathcal{O}^\star(\ell^{\mathrm{vc}})\) time, where  \(\mathrm{vc}\) is the size of a vertex cover. 
\end{restatable}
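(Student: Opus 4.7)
The plan is to exploit the fact that $V \setminus X$ is an independent set: once we fix the house types for the vertices in the vertex cover $X$, the only remaining interactions are between independent-set vertices and their neighbors in $X$, and these interactions decouple across the vertices of $V \setminus X$. So I would first branch over all $\ell^{|X|}$ possible type assignments $\tau : X \to [\ell]$ for the vertices in the cover. For each such $\tau$, I can immediately compute the contribution $E_X(\tau)$ to the envy coming from edges inside $G[X]$ (these are fully determined by $\tau$, since envy depends only on types). I would also check feasibility: if the multiset of types used by $\tau$ exceeds the available houses, the branch is discarded; otherwise the residual capacities $r_t = m_t - |\tau^{-1}(t)|$ tell me exactly how many vertices of $V \setminus X$ must receive type $t$.

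Given $\tau$, for each $v \in V \setminus X$ and each type $t \in [\ell]$, I would precompute
\[
c_\tau(v,t) \;=\; \sum_{u \in N(v) \cap X} \envy(v,u,t,\tau(u)) \;+\; \envy(u,v,\tau(u),t),
\]
which is well-defined because $N(v) \subseteq X$ (as $V \setminus X$ is independent). The task is then to assign each $v \in V \setminus X$ a type, subject to the capacity constraints $r_t$, so as to minimize $\sum_v c_\tau(v,\pi(v))$. This is exactly a transportation / assignment problem and I would encode it as a minimum-weight perfect matching in an auxiliary bipartite graph: put $V \setminus X$ on the left, and on the right create $r_t$ copies of a type-$t$ slot for each $t$ (so that the right-hand side has $|V \setminus X|$ vertices), with edge weight $c_\tau(v,t)$ between $v$ and every slot of type $t$. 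The Hungarian algorithm solves this in $\mathcal{O}(n^3)$ time; adding $E_X(\tau)$ gives the optimal envy achievable in this branch.

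For correctness, any optimal allocation $\pi^\ast$ induces some type-restriction $\tau^\ast$ on $X$ and some type-assignment on $V \setminus X$ respecting the residual capacities, and the min-cost matching in the corresponding branch is at most as costly as $\pi^\ast$ restricted to $V \setminus X$; conversely the matching solution is always realizable as an actual assignment of concrete houses because houses of the same type are interchangeable. Returning the minimum over all branches therefore yields the optimal envy in time $\ell^{|X|} \cdot n^{\mathcal{O}(1)} = \mathcal{O}^\star(\ell^{\mathrm{vc}})$. The only subtlety I anticipate is the accounting in the capacity check and in the matching formulation (ensuring that the right-hand slot copies sum to exactly $|V \setminus X|$), but this is a bookkeeping issue rather than a genuine obstacle; the rest of the argument is a clean branch-and-match.
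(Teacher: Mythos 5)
Your proposal is correct and follows essentially the same route as the paper: branch over all $\ell^{\mathrm{vc}}$ type assignments to the vertex cover, then reduce the allocation of the independent-set vertices to a minimum-weight perfect matching in a bipartite graph solved by the Hungarian algorithm (the paper matches to concrete leftover houses rather than type slots, which is equivalent since same-type houses are interchangeable). One small caution: the paper's $\envy(v,u,\cdot,\cdot)$ is already bidirectional by definition, so your $c_\tau(v,t)$ should not sum both orderings, or cross edges would be weighted twice relative to edges inside $G[X]$.
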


\paragraph{Algorithmic idea.}
Let $X=\{x_1,\ldots,x_k\}$. For each agents in $X$, we guess the type of the house assigned to them. Let $t$ be the type of the house assigned to $x_1$. Then, we allocate a house of type $t$ to $x_1$. Similarly, we allocate houses to all the agents in $X$. Let $\{h_1,\ldots,h_k\}$ be the houses allocated to agents in $X$ respectively, and $H'$ be the set of remaining houses. Then, we create a complete weighted bipartite graph between $I=V \setminus X$ and $H'$, where the weight of an edge $uh$, where $u\in I, v\in H'$, is $\sum_{x_i\in N_X(u)}\envy(u,x_i,h,h_i)$, i.e., the total envy between agents $u$ and each of its neighbors in $X$, if the house $h$ is allocated to $u$. Then, we find minimum-weight perfect matching in this graph in polynomial time~\cite{DBLP:books/daglib/p/Kuhn10}. This corresponds into an allocation of houses to agents in $G-X$, which can be extended to a complete allocation by assigning the house $h_i$ to each $x_i \in X$. We return an allocation that minimizes envy. The running time is dominated by the number of guesses for the agents in $X$, which is $\ell^{vc}$. 

\begin{proof}
Let $(G, H, \LR{\val_v}_{v \in V})$ be an instance of \gha. Let $X=\{x_1,\ldots,x_{vc}\}$ be a vertex cover of $G$ and $T$ be set of the types of houses in $H$. Let $\gamma\colon X \rightarrow T$ be an assignment of agents to types of houses and $\Gamma$ be the set of all such possible assignments. Note that $|\Gamma| \leq |T|^{|X|}=\ell^{vc}$. For each assignment, $\gamma \in \Gamma$, we allocate a house of type $\gamma(x_i)$ to $x_i$, where $i\in [vc]$. Let $\{h_1,\ldots,h_{vc}\}$ be the houses allocated to agents in $X$ and $H_\gamma$ be the set of remaining houses. Let $I=V(G-X)$. We create a complete weighted bipartite graph $B_\gamma=(I\cup H_\gamma)$, where the weight of an edge $uh$, where $u\in I, h\in H_\gamma$, is $\sum_{x_i\in N_X(u)}\envy(u,x_i,h,h_i)$. Then, we find minimum weight matching, $M_\gamma$, of the graph $B_\gamma$ using the algorithm in ~\cite{DBLP:books/daglib/p/Kuhn10}. We construct an allocation $\eta_\gamma$ as follows: for $x_i\in X$, $\eta_\gamma(x_i)=h_i$ and for $x\in I$, $\eta_\gamma(x)=M_\gamma(x)$. We return an allocation $\eta = \arg\min_{\gamma \in \Gamma} \envy(\eta_\gamma)$.

Next, we prove the correctness of the algorithm. Let $\mu$ be an optimal allocation. Let $T_\mu$ be the  set of type of houses assigned to agents in $X$. Consider allocation $\gamma \in \Gamma$ such that $\gamma(X)=T_\gamma$. Note that envy between the agents in $X$  is same in both the allocations $\mu$ and $\eta_\gamma$ and there is no envy between the agents in $I$. $\gamma$. Suppose that $\envy(\mu)<\envy(\eta_\gamma)$. Then, $$\sum_{uv\colon u\in I, v\in X} \envy(u,v,\mu) < \sum_{uv\colon u\in I, v\in X} \envy(u,v,\eta_\gamma)$$ 

Let $H_\mu$ be the set of houses assigned to agents in $I$ by the allocation $\mu$. Note that the number of houses of a type is same in both $H_\mu$ and $H_\gamma$ due to the choice of $\gamma$. Hence, we construct a matching $M$ in $B_\gamma$ as follows: for an agent $u\in I$, $M(u)$ is a house in $H_\gamma$ whose type is same as the type of the house $\mu(u)$. Since $\sum_{uv\colon u\in I, v\in X} \envy(u,v,\mu) < \sum_{uv\colon u\in I, v\in X} \envy(u,v,\eta_\gamma)$, it follows that the weight of matching $M$ is smaller than the weight of matching $M_\gamma$, a contradiction to the fact that $M_\gamma$ is a minimum weight matching. 

Since the minimum weight matching can be computed in polynomial time~\cite{DBLP:books/daglib/p/Kuhn10} and size of $|\Gamma|$ is $\ell^{vc}$, the algorithm runs in ${\mathcal O}^\star(\ell^{vc})$ time.

\paragraph{Computing a Vertex Cover.} Computing a minimum vertex cover is \np-hard, but the problem is {\sf FPT} when parameterized by the vertex cover size \( vc \), with algorithms running in \( O^*(1.25284^{vc}) \) time~\cite{DBLP:conf/stacs/0001N24}
\end{proof}




Due to Theorem~\ref{thm:gha-vc-typed}, we have the following corollary, since the size of the vertex cover of a bipartite graph \(G=(L\uplus R,E)\) is at most \(\min\{|L|,|R|\}\). 

\begin{corollary}\label{cor:bipartite}
    For a bipartite graph \(G=(L\uplus R,E)\), \gha can be solved in  \(\mathcal{O}^\star(\ell^{\min\{|L|,|R|\}})\) time.
\end{corollary}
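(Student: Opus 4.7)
The plan is to derive this as an almost immediate consequence of Theorem~\ref{thm:gha-vc-typed}. The key observation is that for a bipartite graph $G=(L \uplus R, E)$, every edge has exactly one endpoint in $L$ and exactly one endpoint in $R$; hence both $L$ and $R$ are (not necessarily minimum) vertex covers of $G$. In particular, the smaller of the two sides is a vertex cover, so the vertex cover number satisfies $\mathrm{vc}(G) \le \min\{|L|,|R|\}$.

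Next, I would run the algorithm underlying Theorem~\ref{thm:gha-vc-typed} using $X = L$ if $|L| \le |R|$, and $X = R$ otherwise. Note that no expensive vertex-cover computation is required in this case: the bipartition is either supplied with the input, or can be obtained in linear time by BFS/two-colouring, and we simply pick the smaller side. Plugging the bound $|X| \le \min\{|L|,|R|\}$ into the running time $\mathcal{O}^\star(\ell^{\mathrm{vc}})$ from Theorem~\ref{thm:gha-vc-typed} yields $\mathcal{O}^\star(\ell^{\min\{|L|,|R|\}})$, as desired.

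There is essentially no technical obstacle here; the only thing to double-check is that the algorithm of Theorem~\ref{thm:gha-vc-typed} is agnostic to whether $X$ is a \emph{minimum} vertex cover or merely \emph{a} vertex cover. Inspecting that proof, the algorithm only uses that $G - X$ is edgeless (so that, after guessing the type assignment on $X$, the remaining agents interact only with $X$ via a bipartite minimum-weight perfect-matching instance). This property holds for any vertex cover, including the chosen side of the bipartition, so the corollary follows directly.
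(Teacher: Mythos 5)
Your proposal is correct and matches the paper's argument exactly: the paper derives the corollary from Theorem~\ref{thm:gha-vc-typed} by noting that the smaller side of the bipartition is a vertex cover of size at most $\min\{|L|,|R|\}$. Your additional check that the algorithm only needs \emph{some} vertex cover (not a minimum one) is a sensible and valid observation, though the paper leaves it implicit.
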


Recall that \gha is \NP-hard for bipartite graphs.

\subsubsection{Clique Modulator as a Parameter}


A \emph{clique modulator} is a set of vertices whose removal turns the graph into
a complete graph. Let \(X\) be such a modulator of size \(k\).

\begin{restatable}{theorem}{thmcliquemod}
\label{thm:gha-clique-modulator}
\gha can be solved in
\(\mathcal{O}^\star(\ell^k)\) time, where $k$ is the size of a clique modulator.
\end{restatable}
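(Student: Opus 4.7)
The plan is to mirror Theorem~\ref{thm:gha-vc-typed}: enumerate each of the $\ell^k$ possible assignments $\gamma:X\to[\ell]$ of house-types to the modulator $X$, and for every such guess reduce the remaining allocation task to a minimum-weight perfect matching solvable in polynomial time by the Hungarian algorithm~\cite{DBLP:books/daglib/p/Kuhn10}. For a fixed $\gamma$, give each $x\in X$ an arbitrary house of type $\gamma(x)$; since valuations depend only on house-types, this pins down the envy contributed by all edges inside $X$ and determines the remaining multiplicities $m_1,\ldots,m_\ell$ of each type available for $V':=V\setminus X$.

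The new ingredient, relative to the vertex-cover proof, is dealing with the envy on the $\binom{|V'|}{2}$ edges of the clique $G-X$. The key observation is that this envy sum telescopes into a per-agent cost once the type multiplicities $(m_1,\ldots,m_\ell)$ on $V'$ are fixed. Specifically, if $t_u$ denotes the type assigned to $u\in V'$, then
\[
\sum_{\{u,v\}\subseteq V'}\envy(u,v,t_u,t_v)
\;=\;\sum_{u\in V'}\sum_{t\in[\ell]} m_t\cdot\max\{\val_u(t)-\val_u(t_u),\,0\},
\]
since summing the unordered envies equals summing $\max\{\val_u(t_v)-\val_u(t_u),0\}$ over ordered pairs $(u,v)$ with $u\neq v$, and grouping the inner sum by the value of $t_v$ introduces the multiplicity $m_t$ of type $t$ (the off-by-one correction at $t=t_u$ is harmless because that particular $\max$ vanishes). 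Thus the clique-internal envy contributed by $u$ depends only on $u$'s own type $t_u$.

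Define the per-agent cost
\[
c_u(t)\;:=\;\sum_{t'\in[\ell]} m_{t'}\max\{\val_u(t')-\val_u(t),0\}\;+\;\sum_{x\in N_X(u)}\envy\bigl(u,x,t,\gamma(x)\bigr),
\]
which captures both $u$'s contribution to the clique-internal envy and $u$'s envy with its $X$-neighbors. Minimizing the total envy consistent with $\gamma$ then reduces to choosing $t_u\in[\ell]$ for each $u\in V'$ subject to $|\{u:t_u=t\}|=m_t$ so as to minimize $\sum_{u\in V'}c_u(t_u)$, which is precisely a minimum-weight perfect matching between $V'$ and the $|V'|$ remaining houses (grouped by type), with edge weight $c_u(\mathrm{type}(h))$. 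Running the Hungarian algorithm for each of the $\ell^k$ guesses and returning the best allocation overall gives the $\mathcal{O}^\star(\ell^k)$ bound. Correctness follows by the standard exchange argument from Theorem~\ref{thm:gha-vc-typed}: the guess matching the types an optimal allocation assigns to $X$ is enumerated, and for that guess the matching step finds an allocation on $V'$ whose combined envy is at most that of the optimum.

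The main obstacle is the telescoping identity above. A priori, the clique contributes $\Theta(|V'|^2)$ coupled envy terms involving distinct non-identical valuations, and there is no obvious reason these should be solvable in polynomial time after fixing $\gamma$. The realization that fixing the type multiplicities makes the clique envy a separable sum of per-agent costs is what enables the matching reduction; without it, one could not avoid a blow-up proportional to the number of clique edges.
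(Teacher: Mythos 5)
Your proposal is correct and follows essentially the same route as the paper: guess the $\ell^k$ type assignments on the modulator $X$, observe that once the multiset of houses left for the clique $V\setminus X$ is fixed, each agent's envy toward its clique neighbors depends only on its own assigned type (your telescoping identity is exactly the extra term $\sum_{h'\in H'}\max\{\val_u(h')-\val_u(h),0\}$ the paper folds into the edge weights), and solve a minimum-weight bipartite matching per guess. Your writeup of the separability step is in fact more explicit than the paper's, but the algorithm and correctness argument are the same.
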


\paragraph{Algorithmic Idea.}
The idea is similar to previous algorithm. 
For each agents in $X$, we guess the type of the house assigned to them, and assign houses to them as earlier. Let $H'$ is the set of remaining houses. We create a complete weighted bipartite graph on $I,V\setminus X$ and $H'$. In this case, agents in $G-X$ also envy each other. Thus, we set weight on edges as follows: the weight of an edge $uh$, where $u\in I, h\in H'$, is $\sum_{x_i\in N_X(u)}\envy(u,x_i,h,h_i)+\sum_{h\in H'}\max\{\alpha_u(h)-\alpha_u(v),0\}$.

\begin{proof}
The algorithm is same as the previous algorithm (Theorem~\ref{thm:gha-vc-typed}), the only difference is the weight on the edges in the bipartite graph. The weight of an edge $uh$, where $u\in I, h\in H'$, is $\sum_{x_i\in N_X(u)}\envy(u,x_i,h,h_i)+\sum_{h\in H'}\max\{\alpha_u(h)-\alpha_u(v),0\}$. The correctness proof and running time analysis is also same as earlier.

\paragraph{Computing a Clique Modulator.}
 Computing a minimum clique modulator is NP-hard, but the problem is fixed-parameter tractable (FPT) when parameterized by the modulator size \( k \), with algorithms running in \( O^*(2^k) \) time~\cite{DBLP:journals/siamdm/GutinMOW21}.
\end{proof}





Due to Theorem~\ref{thm:gha-vc-typed} and \ref{thm:gha-clique-modulator}, we obtain the following result. 


\begin{restatable}{corollary}{thmsplit}
\label{thm:splitgraphs}
For a split graph $G=(C \uplus I, E)$, \gha can be solved in
\(\mathcal{O}^\star(\ell^{\min\{|C|,|I|\}})\) time.
\end{restatable}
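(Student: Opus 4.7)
The plan is to exploit the defining structural properties of a split graph $G=(C \uplus I, E)$ and apply the two preceding theorems, selecting whichever parameter is smaller. The key observation is that the partition $(C,I)$ directly provides both kinds of modulators that our earlier algorithms require.

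First I would argue that $C$ is a vertex cover of $G$. Every edge of $G$ either lies within the clique $C$ (in which case both endpoints are in $C$) or has one endpoint in $C$ and one in $I$ (since $I$ is an independent set, no edge has both endpoints in $I$). Hence every edge is incident to at least one vertex of $C$, so $C$ is a vertex cover of size at most $|C|$. By \Cref{thm:gha-vc-typed}, invoking the vertex-cover algorithm with $X := C$ solves \gha in time $\mathcal{O}^\star(\ell^{|C|})$.

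Next I would argue symmetrically that $I$ is a clique modulator. By definition of a split graph, $G - I = G[C]$ is a clique, so $I$ is a modulator to the class of complete graphs of size $|I|$. By \Cref{thm:gha-clique-modulator}, invoking the clique-modulator algorithm with the modulator $I$ solves \gha in time $\mathcal{O}^\star(\ell^{|I|})$.

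Finally, to obtain the claimed bound I would simply compare $|C|$ and $|I|$ at the start (computable in linear time from the split partition) and run whichever of the two algorithms uses the smaller exponent, giving a total running time of $\mathcal{O}^\star(\ell^{\min\{|C|,|I|\}})$. There is no substantial obstacle here: the partition $(C,I)$ of a split graph can be computed in polynomial time, and the rest is a direct application of \Cref{thm:gha-vc-typed} and \Cref{thm:gha-clique-modulator}, so the entire argument is essentially a structural observation followed by invoking the stronger of the two algorithms.
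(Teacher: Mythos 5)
Your proof is correct and matches the paper's intended argument exactly: the paper derives this corollary directly from Theorems~\ref{thm:gha-vc-typed} and~\ref{thm:gha-clique-modulator} by observing that $C$ is a vertex cover and $I$ is a clique modulator, then taking the better of the two bounds. No gaps.
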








\subsection{\gha on Complete Bipartite Graphs}

Complete bipartite graph is polynomial-time solvable for identical valuations~\cite{DBLP:conf/atal/HosseiniPSVV23}. We show that it is also polynomial-time solvable for constant number of house types. However, the complexity remains open, in general.




\begin{restatable}{theorem}{thmcbipindv}
\label{thm:gha-complete-bipartite-indv}
 For a complete bipartite graph $G=(L\uplus R)$, \gha 
can be solved in $\mathcal{O}^\star\!\left(\left(1 + \frac{k}{\ell}\right)^{\ell}\right)$
time, where $k = \min\{|L|,|R|\}$. 
\end{restatable}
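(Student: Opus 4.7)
The plan is to exploit the complete bipartite structure together with the observation that, under $\ell$ house types, the envy on any edge depends only on the \emph{types} assigned to its endpoints. Let $t: V \to [\ell]$ denote the type function induced by an allocation $\pi$, and let $n_j$ be the total number of houses of type $j$. Since $L$ and $R$ are independent sets, only $L$--$R$ edges contribute, so
\[
\envy(\pi) \;=\; \sum_{v \in L}\sum_{u \in R}\envy(v,u,t(v),t(u)).
\]
Because every $v \in L$ is adjacent to every $u \in R$, the inner sum over $u$ collapses to a function of $v$'s type and the \emph{type-histogram} of $R$: writing $y_j$ for the number of $R$-vertices of type $j$,
\[
\sum_{u \in R}\max\bigl\{\val_v(t(u)) - \val_v(t(v)),\,0\bigr\} \;=\; \sum_{j \in [\ell]} y_j \cdot \max\bigl\{\val_v(j) - \val_v(t(v)),\,0\bigr\},
\]
and a symmetric identity with the $L$-histogram $(x_j)_{j \in [\ell]}$ holds for each $u \in R$.

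Based on this, the algorithm I propose enumerates every non-negative integer vector $(x_1,\ldots,x_\ell)$ with $x_j \le n_j$ and $\sum_j x_j = |L| = k$, where $x_j$ records the number of type-$j$ houses placed on the $L$-side (so $y_j = n_j - x_j$ on the $R$-side). With such a vector fixed, the total envy decomposes into a ``pure-$L$'' term that depends only on $t|_L$ (the $R$-side enters solely through $(y_j)$) and a symmetric ``pure-$R$'' term. I would then optimise each side \emph{independently} by a minimum-weight bipartite perfect matching: create $x_j$ slots of type $j$ on the $L$-side, put cost $\sum_j y_j \max\{\val_v(j) - \val_v(a),0\}$ on the edge between vertex $v$ and any type-$a$ slot, and dually on the $R$-side with counts $(y_j)$. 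These matchings are solvable in polynomial time via the Hungarian algorithm recalled earlier in this section, and the actual house of each type can be placed into its type-slot arbitrarily, extending the two matchings to a bijection $\pi : V \to H$ whose envy equals exactly the value predicted by the type pattern.

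For correctness, any optimal allocation induces some histogram $(x_j)$ that the enumeration considers, and the decoupling above shows that the minimum envy compatible with a fixed histogram is realised by the two independent matching optima; hence taking the best over all enumerated vectors yields the global optimum. The number of vectors enumerated is $\binom{k+\ell-1}{\ell-1}$, which I plan to upper-bound by $\mathcal{O}^\star\!\bigl((1+k/\ell)^\ell\bigr)$ using standard binomial estimates, and each iteration runs in polynomial time. The main obstacle to get right is the decoupling step: one must verify carefully that, for a fixed $(x_j)$, the optimum $L$-side assignment is genuinely insensitive to the internal structure of the $R$-side beyond its histogram $(y_j)$, which is precisely what completeness of the bipartite graph, combined with the type-only dependence of valuations, delivers.
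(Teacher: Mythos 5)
Your proposal is correct, and while it shares the paper's outer loop (enumerating the $\binom{k+\ell-1}{\ell-1}=\mathcal{O}^\star\bigl((1+k/\ell)^{\ell}\bigr)$ type-histograms for the $L$-side), the inner step is genuinely different and, in fact, more careful. The paper, for each histogram, ``enumerates all assignments of houses to agents in $L$,'' assigns the leftover houses to $R$ arbitrarily, and asserts that the envy depends only on the $L$-side assignment; as written this glosses over two points: the number of type-assignments to $L$ consistent with a histogram is a multinomial coefficient (not polynomial in $n$), and the $R$-to-$L$ envy terms $\max\{\val_u(t(v))-\val_u(t(u)),0\}$ do depend on which agent in $R$ receives which type, so the within-$R$ assignment cannot be arbitrary. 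Your decomposition
\[
\envy(\pi)=\sum_{v\in L}\sum_{j\in[\ell]} y_j\max\{\val_v(j)-\val_v(t(v)),0\}
+\sum_{u\in R}\sum_{j\in[\ell]} x_j\max\{\val_u(j)-\val_u(t(u)),0\}
\]
handles both issues: once $(x_j)$ (hence $(y_j)=(n_j-x_j)$) is fixed, the two sums are decoupled, each is a standard assignment problem between agents and type-slots, and each is solved exactly by a minimum-weight perfect matching in polynomial time. The completeness of the bipartite graph is exactly what makes the inner sums collapse to histogram-weighted costs, so the decoupling you flag as the ``main obstacle'' does go through. Your route buys a cleaner correctness argument and a running time that genuinely matches the claimed bound; the paper's route is shorter to state but relies on the unproven (and, for the $R$-side, incorrect as literally phrased) claim that the residual assignment is irrelevant.
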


\begin{proof}
Let $G=(L\cup R,E)$ be a complete bipartite graph.
Let
\[
k=\min\{|L|,|R|\},
\]
and without loss of generality assume $|L|=k$.
Then $|R|=n-k$.
There are $n$ houses partitioned into $\ell$ types.

\medskip
\noindent
\textbf{Algorithm.}
We enumerate all tuples $(t_1,\ldots,t_\ell)$ such that
\[
\sum_{i=1}^{\ell} t_i = k,
\]
where $t_i$ denotes the number of houses of type $i$ assigned to agents in $L$.
For each such tuple, we enumerate all assignments of houses to agents in $L$
that respect these counts.
For every assignment, we compute the total envy contributed by edges incident
to vertices in $L$.

Since $G$ is complete bipartite, every edge has one endpoint in $L$ and one in
$R$.
The remaining houses are assigned arbitrarily to agents in $R$.

\medskip
\noindent
\textbf{Correctness.}
We show that the algorithm returns an optimal allocation.

Consider any optimal allocation $\pi^*$.
Let $(t_1^*,\ldots,t_\ell^*)$ be the number of houses of each type assigned to
agents in $L$ under $\pi^*$.
By construction,
\[
\sum_{i=1}^{\ell} t_i^* = |L| = k,
\]
so this tuple is enumerated by the algorithm.
Moreover, the restriction of $\pi^*$ to agents in $L$ is one of the assignments
considered for this tuple.

Since all edges are between $L$ and $R$, the total envy depends only on which
houses are assigned to agents in $L$.
Therefore, when the algorithm evaluates this assignment, it computes exactly
the same envy value as in $\pi^*$.
Hence, the algorithm will consider an allocation with optimal envy and will
output an allocation of minimum possible envy.

\medskip
\noindent
\textbf{Running time.}
The number of tuples $(t_1,\ldots,t_\ell)$ satisfying
$\sum_{i=1}^{\ell} t_i=k$ is
\[
O\!\left(\left(1+\frac{k}{\ell}\right)^\ell\right).
\]
For each tuple, the number of consistent assignments is polynomial in $n$.
Thus, the total running time is
\[
O\!\left(\left(1+\frac{k}{\ell}\right)^\ell\cdot \poly(n)\right),
\]
which is polynomial in $n$ when $\ell$ is treated as a constant.

\medskip
\noindent
\textbf{Conclusion.}
Therefore, \gha\ is solvable in polynomial time on complete bipartite graphs
(augmented with isolated vertices) when the number of house types $\ell$ is
constant.
\end{proof}




\section{Exact Algorithms for \ghafull}

It is known that the \gha problem remains \NP-hard even under identical valuations on
trees~\cite{DBLP:conf/atal/Hosseini0SVV24} and on disjoint unions of graphs~\cite{DBLP:conf/atal/HosseiniPSVV23}.
In this section, we consider the general case of non-identical valuations and present
exact exponential-time algorithms.
Specifically, for instances in which agents’ valuation functions take
polynomially bounded integer values, we design algorithms running in
\(\mathcal{O}^\star(2^n)\) time for trees, and in \(\mathcal{O}^\star(2^n \cdot T)\) time
for disjoint unions of graphs, where \(T\) denotes the time required to solve
\gha on a given graph class.

\subsection{\gha on Trees}

\begin{restatable}{theorem}{treethm}
\label{thm:tree}
When the social graph \(G\) is a tree, \gha can be solved in
\(\mathcal{O}^\star(3^n)\) time.
Moreover, if the agents’ valuation functions take polynomially bounded integer values,
the problem can be solved in \(\mathcal{O}^\star(2^n)\) time.
\end{restatable}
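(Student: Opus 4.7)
The plan is to exploit the tree structure directly with a subset-of-houses dynamic program, since the treewidth-based algorithm of Theorem~\ref{thm:tw} instantiated with $\ell=n$ yields a much weaker running time. I would root $G$ at an arbitrary vertex $r$, write $T_v$ for the subtree at $v$ and $n_v = |V(T_v)|$, and compute, for every vertex $v$, every subset $S \subseteq H$ with $|S|=n_v$, and every $h \in S$,
\[
f_v(S,h) \;=\; \min_{\pi}\; \sum_{(u,u') \in E(T_v)} \envy(u,u',\pi(u),\pi(u')),
\]
where the minimum is over bijections $\pi : V(T_v) \to S$ with $\pi(v)=h$. For a leaf $v$, set $f_v(\{h\},h)=0$ for every $h\in H$, and read off the final answer as $\min_{h\in H} f_r(H,h)$. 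At an internal vertex $v$ with children $c_1,\ldots,c_k$, I would merge the children into $v$ one at a time: writing $g_v^j$ for the value after absorbing the first $j$ subtrees, the recurrence
\[
g_v^j(S,h) \;=\; \min_{\substack{S = S' \uplus S_j \\ h \in S'}} \Bigl(\, g_v^{j-1}(S',h) \;+\; \min_{h' \in S_j}\bigl[\, f_{c_j}(S_j,h') + \envy(v,c_j,h,h') \,\bigr] \Bigr)
\]
cleanly isolates the only newly-introduced envy, namely that on the edge $(v,c_j)$, since no other edges cross between $T_{c_j}$ and the previously-merged portion.

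For the $\mathcal{O}^\star(3^n)$ bound, I would bound the work of a single merge by the number of tuples $(h, S', S_j, h')$ with $h \in S'$, $h' \in S_j$ and $S' \cap S_j = \emptyset$; this count is at most $n^2 \cdot \sum_{S \subseteq H} 2^{|S|} = n^2 \cdot 3^n$. Since the rooted tree admits at most $n-1$ merges in total, the overall running time is $\mathcal{O}^\star(3^n)$.

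For the $\mathcal{O}^\star(2^n)$ improvement, I would observe that once the pair $(h,h')$ is fixed, the inner minimization is a classical min-plus subset convolution over the universe $H \setminus \{h,h'\}$: with $A(T) \coloneqq g_v^{j-1}(T \cup \{h\},h)$ and $B(T) \coloneqq f_{c_j}(T \cup \{h'\},h') + \envy(v,c_j,h,h')$, one computes $(A \oplus B)(S) = \min_{S = T_1 \uplus T_2} \bigl(A(T_1) + B(T_2)\bigr)$. By the result of Bj\"orklund, Husfeldt, Kaski and Koivisto (STOC 2007), a min-plus subset convolution on integer-valued inputs bounded by $M$ can be computed in $\mathcal{O}^\star(M \cdot 2^n)$ time by embedding each value $v$ as the monomial $x^v$ and running a ring subset convolution on polynomials of degree $\mathcal{O}(nM)$. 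Under the assumption that valuations are polynomially bounded integers, the total envy is polynomially bounded, so $M = n^{\mathcal{O}(1)}$ and each convolution costs $\mathcal{O}^\star(2^n)$; summed over the $\mathcal{O}(n^2)$ choices of $(h,h')$ and the $\mathcal{O}(n)$ merges, the total remains $\mathcal{O}^\star(2^n)$.

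The main obstacle I anticipate is the careful bookkeeping around the auxiliary house-indices $h, h'$: every invocation of subset convolution is naturally defined on the restricted universe $H \setminus \{h,h'\}$, and one must ensure that the $\mathcal{O}(n^2)$ overhead from enumerating these pairs stays outside the exponential factor, so that the overall cost does not degrade to $3^n$. A secondary check is that no edge is double-counted across merges, which is immediate from the tree structure: the unique edge between $T_{c_j}$ and the already-processed portion is precisely $(v,c_j)$, and its envy contribution is the only term added in the transition.
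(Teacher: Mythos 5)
Your proposal is correct and takes essentially the same approach as the paper: a rooted-tree dynamic program indexed by (vertex, subset of houses, house at the root of the subtree), merging children one at a time so that only the edge \((v,c_j)\) contributes new envy, with the \(\mathcal{O}^\star(3^n)\) bound coming from enumerating disjoint partitions and the \(\mathcal{O}^\star(2^n)\) bound from min-plus subset convolution under polynomially bounded integer valuations. If anything, your treatment of the \(2^n\) case (fixing the pair \((h,h')\), restricting the universe to \(H\setminus\{h,h'\}\), and invoking Bj\"orklund et al.) is spelled out more explicitly than in the paper's own proof of this theorem, which only works out the \(3^n\) analysis here and introduces the subset-convolution speedup later for the disjoint-union result.
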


\begin{proof}

Let the underlying structure be a rooted tree.
For a vertex $v$, let its children be denoted by $u_1,\dots,u_k$.
Let $H$ be the set of houses.

We define the following quantities.

\begin{itemize}
    \item $O(v,h_v,S)$ denotes the minimum total envy achievable in the subtree rooted at $v$, assuming that vertex $v$ is assigned house $h_v$ and the set of houses used in this subtree is exactly $S \subseteq H$.
    \item $Q(v,i,h_v,Z)$ denotes the minimum total envy in the union of the subtrees rooted at the first $i$ children $u_1,\dots,u_i$, assuming that $v$ receives house $h_v$, and the set of houses used by these $i$ subtrees is exactly $Z \subseteq H\setminus\{h_v\}$.
\end{itemize}

We compute these values using the dynamic programming tables
\[
T[v,h_v,S] := O(v,h_v,S),
\qquad
P[v,i,h_v,Z] := Q(v,i,h_v,Z).
\]

\paragraph{Base Case.}
If $v$ is a leaf, then its subtree consists only of $v$. Hence,
\[
T[v,h_v,S] =
\begin{cases}
0, & \text{if } S = \{h_v\},\\
\infty, & \text{otherwise}.
\end{cases}
\]
This is correct since no envy can arise in a single-vertex subtree.

\paragraph{Recurrence Definitions.}

Let $u_1,\dots,u_k$ be the children of $v$.

\paragraph{Initialization for the first child.}
For the first child $u_1$, define
\[
\begin{aligned}
P[v,1,h_v,Z]
=
\min_{h_1 \in Z}
\Big(
T[u_1,h_1,Z]
+
\envy(v,u_1,h_v,h_1)
\Big).
\end{aligned}
\]

This expression accounts for the optimal envy within the subtree of $u_1$ and the envy created between $v$ and $u_1$.

\paragraph{General recurrence.}
For $i \ge 2$, define
\[
\begin{aligned}
P[v,i,h_v,Z]
&=
\min_{\substack{X \cup Y = Z\\ X \cap Y = \emptyset}}
\Big(
P[v,i-1,h_v,X] \\
&\quad +
\min_{h_i \in Y}
\big(
T[u_i,h_i,Y]
+
\envy(v,u_i,h_v,h_i)
\big)
\Big).
\end{aligned}
\]

This recurrence distributes the set $Z$ of available houses between the first $i-1$ children and the $i$-th child, and adds the envy created between $v$ and that child.

\paragraph{Final combination at node $v$.}
Once all children are processed, define
\[
T[v,h_v,S]
=
\begin{cases}
P[v,k,h_v,S\setminus\{h_v\}], & \text{if } h_v \in S,\\
\infty, & \text{otherwise}.
\end{cases}
\]

This ensures that $v$ receives house $h_v$, while the remaining houses are distributed among its children.

\paragraph{Correctness Proof.}

We prove correctness by induction on the depth of the tree.

\paragraph{Induction Hypothesis.}
For every node $u$ strictly below $v$, and for all valid pairs $(h_u,S)$, the values
\[
T[u,h_u,S] \quad \text{and} \quad P[u,i,h_u,Z]
\]
correctly represent the minimum total envy in their respective subproblems.

\paragraph{Goal.}
We show that for node $v$ and every valid pair $(h_v,S)$,
\[
T[v,h_v,S] = O(v,h_v,S).
\]

\paragraph{Correctness of $P[v,1,h_v,Z]$.}

\textbf{($\ge$ direction).}
For any choice of $h_1 \in Z$, the quantity
\[
T[u_1,h_1,Z] + \envy(v,u_1,h_v,h_1)
\]
represents the envy incurred inside the subtree of $u_1$ together with the envy between $v$ and $u_1$.
By the induction hypothesis, $T[u_1,h_1,Z]$ is minimal.
Hence no feasible solution can have smaller cost, implying
\[
P[v,1,h_v,Z] \ge Q(v,1,h_v,Z).
\]

\textbf{($\le$ direction).}
Let an optimal solution assign house $h_1^*$ to $u_1$.
By the induction hypothesis,
\[
T[u_1,h_1^*,Z] = O(u_1,h_1^*,Z).
\]
The recurrence explicitly considers this choice, hence
\[
P[v,1,h_v,Z] \le Q(v,1,h_v,Z).
\]

Thus equality holds.

\paragraph{Correctness of $P[v,i,h_v,Z]$ for $i \ge 2$.}

\textbf{($\ge$ direction).}
The recurrence enumerates all partitions $Z = X \cup Y$ with $X \cap Y = \emptyset$.
For each such partition, it combines:
\begin{itemize}
    \item the optimal envy for children $u_1,\dots,u_{i-1}$ using $X$, and
    \item the optimal envy for child $u_i$ using $Y$, together with the envy between $v$ and $u_i$.
\end{itemize}
By the induction hypothesis, both components are minimal for their subproblems.
Thus,
\[
P[v,i,h_v,Z] \ge Q(v,i,h_v,Z).
\]

\textbf{($\le$ direction).}
Let an optimal solution for $Q(v,i,h_v,Z)$ assign disjoint subsets
$Z_1,\dots,Z_i$ to children $u_1,\dots,u_i$, with $Z_i$ assigned to $u_i$.
Let $h_i^* \in Z_i$ be the house assigned to $u_i$.

By the induction hypothesis,
\[
P[v,i-1,h_v,Z\setminus Z_i]
=
Q(v,i-1,h_v,Z\setminus Z_i),
\]
and
\[
T[u_i,h_i^*,Z_i] = O(u_i,h_i^*,Z_i).
\]

Since this exact combination appears in the minimization defining $P[v,i,h_v,Z]$, we obtain
\[
P[v,i,h_v,Z] \le Q(v,i,h_v,Z).
\]

\paragraph{Correctness of $T[v,h_v,S]$.}

By definition,
\[
T[v,h_v,S] = P[v,k,h_v,S\setminus\{h_v\}].
\]

Since $P[v,k,\cdot]$ correctly computes the minimum envy over all valid allocations among the children, this value equals the minimum total envy achievable in the subtree rooted at $v$ when $v$ is assigned $h_v$.

Therefore,
\[
T[v,h_v,S] = O(v,h_v,S).
\]

\paragraph{Conclusion.}
By induction on the depth of the tree, all dynamic programming entries satisfy
\[
T[v,h_v,S] = O(v,h_v,S),
\]
which completes the proof.

\paragraph{Time Complexity.}
For each vertex $v$, the table $T[v,h_v,S]$ has $O(n2^n)$ entries.
The auxiliary table $P[v,i,h_v,Z]$ considers all partitions of a set, leading to
\[
\sum_{k=0}^n \binom{n}{k} 2^k = 3^n
\]
total combinations.
Hence the total running time is
\[
\mathcal{O}^*(3^n).
\]

\end{proof}



\begin{restatable}{theorem}{thmdisjoint}
\label{thm:disjoint}
Let \(G\) be a disjoint union of graphs \(G_1,G_2,\ldots,G_k\), where each \(G_i\) has
\(n_i\) agents and admits an algorithm for \gha running in time \(T\).
Then \gha on \(G\) can be solved in \(\mathcal{O}^\star(3^n\cdot T)\) time, where
\(n=\sum_{i=1}^k n_i\).
Moreover, if valuation functions are polynomially bounded integers, the running time
improves to \(\mathcal{O}^\star(2^n\cdot T)\).
\end{restatable}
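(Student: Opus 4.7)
The plan is to reduce \gha\ on the disjoint union $G = G_1 \sqcup \cdots \sqcup G_k$ to two decoupled subproblems: (i) partitioning $H$ into $H = S_1 \sqcup \cdots \sqcup S_k$ with $|S_i| = n_i$, and (ii) optimally allocating the houses in $S_i$ within $G_i$ using the assumed black-box algorithm of running time $T$. Since no edge crosses between components, any allocation $\pi$ satisfies $\envy(\pi) = \sum_{i=1}^{k} \envy(\pi|_{V(G_i)})$, so once the partition is fixed the two subproblems become independent.

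Concretely, I would first precompute, for every component $i \in [k]$ and every $S \subseteq H$ with $|S| = n_i$, the value
\[
f_i(S) := \text{minimum envy on } G_i \text{ when its agents receive exactly the houses in } S,
\]
by invoking the black-box algorithm on $(G_i, S, \{\val_v\}_{v \in V(G_i)})$. Each invocation costs $T$, and there are at most $2^n$ relevant subsets per component, so precomputation costs $\mathcal{O}^\star(2^n \cdot T)$. I then fold these tables together via the recurrence
\[
D_j(S) = \min_{\substack{S' \subseteq S \\ |S'| = n_j}} \bigl( D_{j-1}(S \setminus S') + f_j(S') \bigr),
\]
with base case $D_0(\emptyset) = 0$ and $D_0(S) = \infty$ for $S \neq \emptyset$, reading off the answer at $D_k(H)$. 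Direct evaluation of this recurrence enumerates all pairs $(S, S')$ with $S' \subseteq S$, requiring $\sum_{S \subseteq H} 2^{|S|} = 3^n$ iterations and yielding the $\mathcal{O}^\star(3^n \cdot T)$ bound in general.

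To improve the combining step to $\mathcal{O}^\star(2^n)$ under polynomially bounded integer valuations, I would replace the naive subset enumeration by \emph{min-sum subset convolution}: for $g, h : 2^H \to \mathbb{Z}_{\ge 0} \cup \{\infty\}$, define $(g \oplus h)(S) := \min_{T \subseteq S} \bigl( g(T) + h(S \setminus T) \bigr)$. Standard technique encodes each finite value $v$ as the monomial $x^v$ and invokes the Björklund--Husfeldt--Kaski--Koivisto subset sum convolution on the resulting polynomials truncated at degree $M$, taking $\mathcal{O}^\star(2^n \cdot M)$ time. When valuations are polynomially bounded integers, the total envy of any allocation is at most $n^2 \cdot V_{\max} = n^{\mathcal{O}(1)} =: M$, so each convolution runs in $\mathcal{O}^\star(2^n)$. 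Iterating this convolution $k - 1$ times to fold $f_1, \ldots, f_k$ into $D_k$ costs $\mathcal{O}^\star(2^n)$ overall, which is dominated by the $\mathcal{O}^\star(2^n \cdot T)$ precomputation. The cardinality constraints $|S| = n_i$ need no special treatment: setting $f_i(S) = \infty$ whenever $|S| \neq n_i$ forces any partition of positive weight to respect them automatically.

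The main obstacle, in my view, is justifying the $\mathcal{O}^\star(2^n)$ bound for the combining step: one must check that the polynomial encoding faithfully reduces min-sum to ring subset convolution, that truncation at degree $M$ never discards the optimum (since all feasible envies lie in $[0, M]$), and that disjointness of $T$ and $S \setminus T$ is handled intrinsically by the BHKK framework. The $\mathcal{O}^\star(3^n \cdot T)$ bound, by contrast, requires only the elementary subset-sum identity and a direct inductive correctness argument for the recurrence $D_j$.
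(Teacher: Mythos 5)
Your proposal is correct and follows essentially the same route as the paper: a dynamic program over components indexed by subsets of houses, with the naive subset enumeration giving $\mathcal{O}^\star(3^n\cdot T)$ and min-sum subset convolution improving the combining step to $\mathcal{O}^\star(2^n)$. If anything, you are more explicit than the paper about why the polynomially-bounded-integer assumption is needed (to bound the degree $M$ in the BHKK encoding), and your memoization of the black-box calls into tables $f_i$ is a harmless refinement of the same idea.
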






\begin{proof}
We process the connected components one by one.

Define a dynamic programming table \( T \) as follows.
For \( i \ge 0 \) and \( S \subseteq H \), let \( T[i,S] \) denote the minimum
total envy achievable when allocating the houses in \( S \) to the first
\( i \) graphs \( G_1, \dots, G_i \).

Let \( \mathcal{A}(G_i, R) \) denote an algorithm that computes the minimum
total envy for the graph \( G_i \) when it is allocated the set of houses
\( R \subseteq H \).

\medskip
\noindent
\textbf{Base case.}
For \( i = 0 \), no graphs are processed. Hence,
\[
T[0, \emptyset] = 0,
\]
and for any nonempty \( S \subseteq H \), we set \( T[0,S] = \infty \).

\medskip
\noindent
\textbf{Recurrence (for \( i \ge 1 \)).}
For every subset \( S \subseteq H \),
\[
T[i, S]
=
\min_{\substack{R \subseteq S \\ |R| = |V(G_i)|}}
\Big(
T[i-1, S \setminus R] + \mathcal{A}(G_i, R)
\Big).
\]

Here, \( R \) represents the subset of houses assigned to graph \( G_i \),
whose size must equal the number of agents in \( G_i \).
The term \( T[i-1, S \setminus R] \) corresponds to the minimum total envy
for allocating the remaining houses to the first \( i-1 \) graphs.

\medskip
\noindent
\textbf{Proof of correctness.}

\medskip
\noindent
\textbf{Inductive hypothesis.}
Assume that for some \( i \ge 1 \), and for every subset \( S \subseteq H \),
the value \( T[i-1,S] \) correctly equals
\[
\mathrm{OPT}(i-1,S),
\]
the minimum total envy achievable when allocating the houses in \( S \)
to the graphs \( G_1,\dots,G_{i-1} \).

\medskip
\noindent
\textbf{Goal.}
We show that for every \( S \subseteq H \),
\[
T[i,S] = \mathrm{OPT}(i,S),
\]
where \( \mathrm{OPT}(i,S) \) denotes the minimum total envy when allocating
houses in \( S \) to graphs \( G_1,\dots,G_i \).

Fix a subset \( S \subseteq H \), and write
\( T := T[i,S] \) and \( \mathrm{OPT} := \mathrm{OPT}(i,S) \).
We prove both inequalities:
\[
T \ge \mathrm{OPT}
\quad \text{and} \quad
T \le \mathrm{OPT}.
\]

\medskip
\noindent
\paragraph{Part 1:  showing \( T \ge \mathrm{OPT} \).}

By the recurrence,
\[
T[i,S]
=
\min_{\substack{R \subseteq S \\ |R| = |V(G_i)|}}
\left(
T[i-1, S \setminus R] + \mathcal{A}(G_i, R)
\right).
\]

For every such subset \( R \), the value
\( \mathcal{A}(G_i,R) \) correctly computes the minimum envy for assigning
\( R \) to \( G_i \), and by the inductive hypothesis,
\[
T[i-1, S \setminus R] = \mathrm{OPT}(i-1, S \setminus R).
\]
Thus, each candidate in the minimization corresponds to a valid allocation
of \( S \) into two disjoint parts assigned to
\( G_1,\dots,G_{i-1} \) and \( G_i \), respectively.
Therefore, every candidate value is at least the true optimum, implying
\[
T[i,S] \ge \mathrm{OPT}(i,S).
\]

\medskip
\noindent
\paragraph{Part 2:  showing \( T \le \mathrm{OPT} \).}

Let \( R^* \subseteq S \) be the subset of houses assigned to \( G_i \)
in an optimal solution achieving \( \mathrm{OPT}(i,S) \). Then
\[
\mathrm{OPT}(i,S)
=
\mathrm{OPT}(i-1, S \setminus R^*)
+
\mathcal{A}(G_i, R^*).
\]

By the inductive hypothesis,
\[
T[i-1, S \setminus R^*]
=
\mathrm{OPT}(i-1, S \setminus R^*).
\]
Since the recurrence considers all subsets \( R \subseteq S \) of size
\( |V(G_i)| \), the subset \( R^* \) is included. Hence,
\[
T[i,S]
\le
T[i-1, S \setminus R^*] + \mathcal{A}(G_i, R^*)
=
\mathrm{OPT}(i,S),
\]
which contradicts the assumption that \( T[i,S] > \mathrm{OPT}(i,S) \).
Therefore,
\[
T[i,S] \le \mathrm{OPT}(i,S).
\]

\medskip
\noindent
\textbf{Conclusion.}
Since both inequalities hold, we conclude that
\[
T[i,S] = \mathrm{OPT}(i,S)
\quad \text{for all } i \ge 0 \text{ and } S \subseteq H.
\]
Thus, the dynamic programming recurrence correctly computes the minimum
total envy.

\paragraph{Time Complexity.}
The dynamic programming table contains \( i \cdot 2^n \) entries, where
\( i \le n \) and \( S \subseteq H \). Hence, the number of states is
\( \mathcal{O}^*(2^n) \).
For each entry, the recurrence enumerates all subsets
\( R \subseteq S \), leading to \( \mathcal{O}^*(2^n) \) work per state.
Therefore, the total running time is \( \mathcal{O}^*(3^n) \).

\paragraph{Subset convolution.}
To speed up the recurrence
\[
T[i, S]
=
\min_{\substack{R \subseteq S \\ |R| = |V(G_i)|}}
\left(
T[i-1, S \setminus R] + \mathcal{A}(G_i, R)
\right),
\]
we interpret it as a \emph{min-sum subset convolution}.
Define functions \( f,g : 2^H \to \mathbb{Z} \cup \{\infty\} \) by
\[
f(R) = T[i-1, R],
\qquad
g(R) = \mathcal{A}(G_i, R).
\]
Then the function
\[
h(S) = T[i,S]
\]
is exactly the min-sum subset convolution of \( f \) and \( g \).
Using standard algorithms for min-sum subset convolution, all values
\( T[i,S] \) can be computed in time \( \mathcal{O}^*(2^n) \).
This completes the proof of the theorem.

\end{proof}

It is known from \cite{DBLP:conf/atal/HosseiniPSVV23} and
\cite{DBLP:journals/aamas/BeynierCGHLMW19} that ME-GHA can be solved in polynomial
time when the social graph is a clique or a star, respectively. Therefore, we obtain the following corollary.

\begin{restatable}{corollary}{starclique}
\gha can be solved on disjoint unions of stars and cliques in
\(\mathcal{O}^\star(3^n)\) time. Moreover, if valuation functions are polynomially bounded integers, the running time improves to \(\mathcal{O}^\star(2^n)\)
\end{restatable}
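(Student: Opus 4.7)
The plan is a direct invocation of Theorem~\ref{thm:disjoint} with an appropriate per-component subroutine. Suppose the input social graph $G$ decomposes into connected components $G_1, \ldots, G_r$, where each $G_i$ is either a star or a clique. By the cited results of \cite{DBLP:conf/atal/HosseiniPSVV23} (for cliques) and \cite{DBLP:journals/aamas/BeynierCGHLMW19} (for stars), \gha admits a polynomial-time algorithm on each such component in isolation. Consequently, for any subset $R \subseteq H$ with $|R| = |V(G_i)|$, the quantity $\mathcal{A}(G_i, R)$ required by Theorem~\ref{thm:disjoint} can be computed in time $n^{\mathcal{O}(1)}$, so we may take $T = n^{\mathcal{O}(1)}$.

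Substituting this bound into the two running-time guarantees of Theorem~\ref{thm:disjoint}, the polynomial factor $T$ is absorbed by the $\mathcal{O}^\star$ notation, yielding $\mathcal{O}^\star(3^n \cdot n^{\mathcal{O}(1)}) = \mathcal{O}^\star(3^n)$ in the general case, and $\mathcal{O}^\star(2^n \cdot n^{\mathcal{O}(1)}) = \mathcal{O}^\star(2^n)$ when the valuation functions take polynomially bounded integer values (where the latter hypothesis is precisely the one under which Theorem~\ref{thm:disjoint} invokes min-sum subset convolution). Both bounds in the corollary follow.

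The only conceptual step worth verifying is that Theorem~\ref{thm:disjoint} is genuinely agnostic to the internal structure of the per-component algorithm: it only uses $\mathcal{A}(G_i, R)$ as a black box that, given the component and a specified multiset of houses of matching cardinality, returns the minimum envy attainable on that component under the (non-identical) valuations restricted to $R$. Since the polynomial-time star and clique subroutines from prior work already compute exactly this quantity for arbitrary valuation profiles, no modification is needed. I do not anticipate any real obstacle beyond this bookkeeping; the corollary is essentially a one-line consequence of Theorem~\ref{thm:disjoint}.
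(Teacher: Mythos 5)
Your proposal is correct and matches the paper's own argument exactly: the corollary is obtained by plugging the polynomial-time star and clique subroutines from \cite{DBLP:journals/aamas/BeynierCGHLMW19} and \cite{DBLP:conf/atal/HosseiniPSVV23} into Theorem~\ref{thm:disjoint} as the black-box per-component solver with $T = n^{\mathcal{O}(1)}$, which the $\mathcal{O}^\star$ notation absorbs. Your added remark that the theorem uses $\mathcal{A}(G_i,R)$ only as a black box on the restricted house set is the right (and only) point needing verification, and it holds.
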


\section{Outlook}

In this work, we initiate a systematic study of the algorithmic complexity of \ghafull, when the agents may have differing valuations for houses. To this end, we employed techniques from parameterized complexity and identified several natural parameterizations that makes the problem tractable, when the number of house types is bounded. In the most general scenario, without any restriction on the number of house types, we improve upon the trivial $\mathcal{O}^\star(n!)$ running time obtained via brute-force, we also designed \emph{moderately exponential-time} algorithms for \ghafull that run in $\mathcal{O}^\star(2^{O(n)})$ time in graphs of treewidth $O(n/\log n)$, specifically, for planar graphs. 


Our work opens the door to several new directions and leaves some interesting
open questions. Perhaps the most intriguing concrete question is whether
\gha\ on path graphs or disjoint edges with binary (and more generally,
bi-valued) valuations is solvable in polynomial time. Another promising
direction is designing tractable algorithms for (non-complete) bipartite
graphs where both sides are ``large''. At a broader level, it would be
interesting to explore \fpt-approximation algorithms for \gha\ that
simultaneously (a) achieve improved approximation ratios compared to the
polynomial-time regime~\cite{DBLP:conf/atal/Hosseini0SVV24} and (b) run faster
than the exact parameterized algorithms developed in this work. 
Finally, an important open direction is to establish meaningful
\emph{lower bounds} for \gha\ under various parameterizations to understand the optimality of our algorithms.


\bibliographystyle{plainnat}   

\bibliography{archiveX}

\end{document}